\newif\ifarxiv\arxivtrue
\newif\ifsagt\sagtfalse
\newcommand{\R}{\mathbb{R}}
\newcommand{\N}{\mathbb{N}}
\newcommand{\bigO}{\mathcal{O}}
\newcommand{\abs}[1]{\left\lvert #1 \right\rvert}
\newcommand*\diff{\mathop{}\!\mathrm{d}} 
\renewcommand{\epsilon}{\varepsilon} 
\renewcommand{\l}{\ell} 
\newcommand{\Paths}{\mathcal{P}}
\renewcommand{\paragraph}[1]{\vspace{0.5em}\noindent\textbf{#1}\;}
\newcommand{\trg}{(G,s,t,\allowbreak r_0,\tau_e,\allowbreak\nup{e},\num{e},\allowbreak\sigma_e,M)} 
\newcommand{\mtpoa}[1]{\text{PoA}(#1)} 
\newcommand{\comp}[1]{T_{#1}} 
\newcommand{\opt}{T_\text{opt}} 
\newcommand{\eq}{T_\text{EQ}} 
\newcommand{\br}[1]{\text{BR}(#1)} 
\newcommand{\allf}[1]{\mathcal{F}({#1})} 
\newcommand{\eact}[1]{E^\prime_{#1}} 
\newcommand{\eres}[1]{E^\star_{#1}} 
\newcommand{\efull}[1]{\bar{E}_{#1}} 
\newcommand{\cspn}[1]{G^\prime_{#1}} 
\newcommand{\ome}[1]{\l_t(\theta_{#1})} 
\newcommand{\fp}[1]{f^+_{#1}} 
\newcommand{\fm}[1]{f^-_{#1}}
\newcommand{\fh}{y} 
\newcommand{\gp}[1]{g^+_{#1}}
\newcommand{\xp}[1]{x^\prime_{#1}}
\newcommand{\lp}[1]{\l^\prime_{#1}}
\newcommand{\lpp}[2]{\l_{#1,#2}^\prime} 
\newcommand{\nup}[1]{\nu^+_{#1}}
\newcommand{\num}[1]{\nu^-_{#1}}
\newcommand{\bm}[1]{b^-_{#1}}
\newcommand{\bp}[1]{b^+_{#1}}
\newcommand{\qp}[1]{q^\prime_{#1}} 
\newcommand{\qpp}[2]{q_{#1,#2}^\prime} 
\newcommand{\phases}{\mathcal{I}_{f}} 
\newcommand{\ca}[1]{\kappa_{#1}} 
\newcommand{\minc}[2]{\mathsf{c}^{#1}_{#2}} 
\newcommand{\inset}[1]{\delta^-(#1)} 
\newcommand{\outset}[1]{\delta^+(#1)} 
\begin{document}
\title{The Impact of Spillback on the Price of Anarchy for Flows Over Time\thanks{Funded by the Deutsche Forschungsgemeinschaft (DFG, German Research Foundation) under Grant BR 4744/2-1 and Germany's Excellence Strategy – The Berlin Mathematics Research Center MATH+ (EXC-2046/1, project ID: 390685689).}}

%
%
\author{Jonas Israel\orcidID{0000-0002-3992-3203} \and
Leon Sering\orcidID{0000-0003-2953-1115}}
\authorrunning{J. Israel and L. Sering}
%
\institute{Technische Universität Berlin, Germany \\ \email{j.israel@tu-berlin.de} \qquad \email{sering@math.tu-berlin.de} 
}

\maketitle              
\begin{abstract}
Flows over time enable a mathematical modeling of traffic that changes as time progresses. In order to evaluate these dynamic flows from a game theoretical perspective we consider the \emph{price of anarchy} (PoA).
In this paper we study the impact of spillback effects on the PoA, which turn out to be substantial. It is known that, in general, the PoA is unbounded in the spillback setting. We extend this by showing 
that it is still unbounded even when considering networks with unit edge capacities and 
that 
the \emph{Braess ratio} can be arbitrarily large.

In contrast to that, we show that on a fixed network the PoA as a function of the flow amount is bounded by a constant and also upper bound the PoA for the set of networks where the outflow capacities satisfy certain constraints depending on the quickest flow. This upper bound only depends on the worst spillback factor of the Nash flows over time of the given network. It therefore provides a way to quantify the impact of spillback to the quality of the dynamic equilibria.

In addition, we show the surprising fact that the introduction of spillback behavior can actually speed up dynamic equilibria in some networks.

\keywords{Nash flow over time \and dynamic equilibria \and deterministic queuing \and price of anarchy \and spillback \and traffic.}
\ifsagt 
\\[1em]
\textbf{Related Version:} A full version of this paper including all proofs is available at
\url{https://arxiv.org/abs/????.?????}.
\fi
\end{abstract}

\section{Introduction} Road traffic is an integral part of modern societies,
which consists of many users with individual behaviors and goals. For this reason traffic dynamics are very hard
to predict and can barely be controlled. However, through recent technologies such as intelligent navigation systems it
might be possible to positively affect the behavior of traffic, steering it towards shorter travel
times leading to less pollution and an overall improved quality of life.

In the following research work we focus on a mathematical traffic flow model called \emph{flows over time with
spillback}. Here, the network is depicted as a graph with a source and a sink, and the traffic flow can progress in
continuous time from one vertex over an edge to the next vertex. We consider flow as
a continuous stream affected by two types of temporal factors. First, flow does not travel instantaneously through the
network but needs actual time to traverse an edge, and second, flow on an edge may change over time. Compared to
static network flows these temporal components enable us to model traffic realistically through different congestion levels.
To model road constraints within the network, we equip each edge with an inflow and an outflow capacity governing with
which rate flow can enter and leave the edge, a length characterizing the time it takes a flow-particle to travel from
the tail to the head of the edge, and finally, a storage capacity which describes how much flow volume fits on the
edge. If the desired outflow exceeds the outflow capacity of an edge the excess flow queues up in front of the
bottle-neck at the head of the edge. If at any point in time the queue of an edge is so large that the amount of flow traversing the edge plus the amount of flow in the queue equals the storage capacity, the edge is considered full and new flow can only enter if
at least as much flow leaves at the same time. With this mechanic it is possible to model \emph{spillback}, i.e.,
the phenomenon that traffic congestion at one street can block exits or intersections further upstream. The ability to model spillback within the framework of flows over time is a very
recent discovery~\cite{sering19}, which has not been studied much yet.

As we experience in our everyday lives traffic is not performing optimal most of the time, but rather consists of agents
that behave egoistically.
Thus, we are interested in game theoretic aspects of this flow model, particularly in the \emph{price of anarchy} (PoA), the ratio of the worst uncoordinated behavior described via a dynamic equilibrium, and the optimal flow behavior measured by some social cost function. In real-world scenarios that ratio could give us an idea of how much one can possibly improve traffic through optimized traffic control, for example through modern navigation systems or autonomous driving.
Even though it has been shown in~\cite{sering19} that the PoA in networks with spillback is unbounded in general we investigate the dependency of the PoA on several
parameters, for example, the minimal spillback factor, which measures how much the
capacities of an edge are reduced due to spillback.
Another interesting phenomenon of selfish road users we study is the well known Braess paradox~\cite{braess68}.
It states that the overall travel time of all users might decrease if a frequently used road segment gets closed. In
reverse, this means that building new roads between heavily used section of the network might cause more congestion and
longer travel times.

\paragraph{Related work.}
Flows over time were first introduced by Ford and Fulkerson~\cite{ford15} in the context of an
optimization problem to route as much flow as possible in a given time horizon. Gale~\cite{gale59} proved the existence of earliest arrival flows which optimize the amount of flow routed to the
sink simultaneously for all points in time and Wilkinson~\cite{wilkinson71} later presented an algorithm to compute these
flows. 
For an overview on flows over time from an optimization point of view we refer to the survey by Skutella~\cite{skutella09}.
From a game theoretic point of view, flows over time were first considered by Vickrey~\cite{vickrey69} in the setting of transportation research. In the last years the theory of Nash equilibria for flow
models has been advanced significantly. From the introduction of the price of anarchy by Koutsoupias and Papadimitriou
\cite{koutsoupias99,papadimitriou01} and the congestion games studied by Roughgarden and Tardos~\cite{roughgarden02,roughgarden05} (both for static flows), over existence results concerning the dynamic (i.e., time dependent) model by
Meunier and Wagner~\cite{meunier10}, to the constructive approach to dynamic equilibria by Koch and Skutella~\cite{koch11}. Here, the authors present a novel notion of dynamic equilibria, called \emph{Nash flows over time}, which enabled a whole set of proceeding research. This new research includes the study of existence, uniqueness and the long-term behavior of Nash flows over
time by Cominetti et al.~\cite{cominetti11,cominetti15,cominetti17}, the work by Macko et al.~\cite{macko10} about the Braess paradox for
flows over time as well as the extension to multi-terminal settings~\cite{sering18}. Of special interest to the paper at hand are the results by Bhaskar et al.~\cite{bhaskar15} and very recently by
Correa et al.~\cite{correa19} about the PoA for flows over time. Since it was already shown that the evacuation-PoA (maximizing the flow amount within some time horizon) is unbounded~\cite{koch11}, they focus on the time-PoA (minimizing the completion time for a given flow amount) for which they establish an upper bound of $\frac{e}{e-1}$ under some constraints on the capacities of the network.
Sering and Vargas
Koch~\cite{sering19} generalized the flows over time model in order to represent spillback and transferred the results about dynamic equilibria to this extension.
Very recently, Graf et al.~\cite{graf2020ide} characterized an alternative equilibrium concept for flows over time, where particles do not predict the future evolution of the flow but instead reconsider their route choice on every node.
In addition, there is a active research line on packet routing models, where traffic is represented by atomic vehicles that traverses the network in discrete time steps. Recent progress in this area is due to Cao et al.~\cite{cao2017network}, Scarsini et al.~\cite{scarsini2018dynamic}, Harks et al.~\cite{harks2018competitive} and Peis et al.~\cite{peis2018oligopolistic}.

\paragraph{Contribution and outline.}
We study the price of anarchy of flows over time with spillback introduced in~\cite{sering19}, which is known to be unbounded in general. After
introducing the model in \Cref{sec:model}, we show in  \Cref{sec:lower} that the PoA stays unbounded even if we
restrict the set of networks to a specific topology but allow arbitrary capacity, or in reverse if we only allow unit
capacities but therefore more complex graph structures.
Furthermore, we show that the
Braess ratio can be arbitrarily large depending only on the minimum edge capacity.
Even
though it seems that the addition of full edges and spillback only increases completion times this is not a general rule, as we show that there
are examples where the completion time of Nash flows over time is larger when disabling spillback.
In contrast to the above lower bounds we show in \Cref{sec:upper} that if we consider the case of temporal routing games on a fixed network, i.e., only the flow amount that gets routed through the network varies, the PoA is bounded by a constant. 
In the end we translate the ideas of~\cite{bhaskar15} to the model with spillback and prove an upper bound of $\frac{\minc{}{} e}{\minc{}{} e - 1}$ on the PoA in networks with specific conditions on the capacities in dependency of a maximal flow over time. This upper bound only depends on the worst spillback factor $\minc{}{}$ of the Nash flows over time of the given network, and therefore provides a way to quantify the impact of spillback to the PoA (note that $e$ denotes the Euler constant here).
Finally, we give a brief conclusion and outlook for further research in \Cref{sec:conclusion}.



\section{The Model} \label{sec:model}
In the following we want to recall the essential definitions of the flow over time model with deterministic queuing. We consider the extended version that handles 
spillback effects, as introduced in~\cite{sering19} and mainly stick to the same notation.

\paragraph{Flow dynamics.}
We consider a \emph{network} $\Gamma = (G, s, t, r_0, \tau, \nu^+, \nu^-, \sigma)$ given by a directed graph $G = (V, E)$ with a single \emph{source} $s$ and a
single \emph{sink} $t$, such that every vertex is reachable from $s$. We have a \emph{network inflow rate} of
$r_0 >0$ determining the constant rate of flow entering the network from time $0$ onward. Furthermore, every edge $e \in E$ is equipped with a \emph{transit
time} $\tau_e \geq 0$, an \emph{in- and outflow capacity} $\nu_e^+ > 0$ and $\nu_e^- > 0$ as well as a \emph{storage capacity} $\sigma_e > 0$. In order to
avoid undefined flow behavior, we require that traversing flow alone can never fill up an edge, i.e., $\sigma_e > \nu_e^+
\cdot \tau_e$ and that the total transit time of every directed cycle is strictly positive. For technical reason we furthermore assume that all properties are rational numbers.

A \emph{flow over time} is given by a family of locally integrable and bounded functions $f = (f_e^+, f_e^-)_{e \in E}$, where $f_e^+, f_e^- \colon \R_{\geq 0} \to \R_{\geq 0}$ denote the in- and outflow rate of edge $e$ at every point in time. The \emph{cumulative in-} and \emph{outflow} and the \emph{queue size} are given by
\[F_e^+(\theta) \coloneqq \!\int_0^\theta f_e^+(\xi) \diff \xi, \quad F_e^-(\theta) \coloneqq\! \int_0^\theta f_e^-(\xi) \;\; \text{ and } \;\; z_e(\theta) \coloneqq F_e^+(\theta - \tau_e) - F_e^-(\theta).\]

We require that flow is preserved at every edge $e$ (\emph{non-deficit constraint}) and at every vertex $v \in V \setminus \set{t}$ (\emph{conservation constraint}), which means, for every point in time $\theta$ we have \vspace{-0.8em}
\[z_e(\theta) \geq 0 \qquad \text{ and } \qquad \sum_{e \in \delta_v^+} f_e^+(\theta) - \sum_{e \in \delta_v^-} f_e^-(\theta) = \begin{cases} 0 & \text{ for } v \in V \setminus \set{s, t},\\
r_0 & \text{ for } v = s.
\end{cases} \]
Here, $\delta_v^-$ is the set of all \emph{incoming} and $\delta_v^+$ the set of all \emph{outgoing} edges of node~$v$.
An edge $e$ is \emph{full} at time $\theta$ if the total amount of flow on $e$, called \emph{edge load}, $d_e(\theta) \coloneqq F_e^+(\theta) - F_e^-(\theta)$ reaches the storage capacity $\sigma_e$. The \emph{inflow bound} $b_e^+(\theta)$ denotes that current inflow capacity, which might be smaller than $\nu_e^+$ due to spillback, and the \emph{push rate} $b_e^-(\theta)$ specify the current \emph{desired} outflow rate, which is reached whenever there are no restrictions of following links. Formally, we have
\[b_e^+(\theta) \!\coloneqq \!\begin{cases}
\nu_e^+ &\hspace{-1.25cm} \text{ if } d_e(\theta) \!<\! \sigma_e\\
\min \set{f_e^-(\theta), \nu_e^+} & \text{else,}
\end{cases} 
\text{ and } \;\;b_e^-(\theta) \!\coloneqq\! \begin{cases}
\nu_e^- & \hspace{-1.2cm} \text{ if } z_e(\theta) \!>\! 0,\\
\min \set{f_e^+(\theta - \tau_e), \nu_e^-} & \text{else.}
\end{cases}\]

A flow over time $f$ is \emph{feasible} if for all edges $e$ and all times $\theta$ it satisfies $f_e^+(\theta) \leq
b_e^+(\theta)$ (\emph{inflow condition}) and if there exists a $c_v \in (0, 1]$ for every $v\in V$ such that for every $e\in \inset{v}$ $f_e^-(\theta) =
\min\set{b_e^-(\theta), \nu_e^- \cdot c_v}$ (\emph{fair allocation condition}). 
Furthermore, we require for all time $\theta$ that every
vertex $v$ with an incoming edge $e_1 \in \delta_v^-$ with $f_{e_1}^-(\theta) <
b_{e_1}^-(\theta)$ (called \emph{throttled edge}) there exists an outgoing edge $e_2 \in \delta_v^+$ with $f_{e_2}^+(\theta) =
b_{e_2}^+(\theta)$ (\emph{no-slack condition}). 
Finally, the set of full edges should be
cycle free at every point in time (\emph{no-deadlock condition}).

For a given $v\in V$ the maximal value $c$ that satisfies the fair allocation condition at a given point in time $\theta$ is called \emph{spillback factor} denoted by $c_v(\theta)$. This value denotes the reduction of the outflow capacity due to spillback leading to the \emph{effective outflow capacity} of $\nu_e^- \cdot c_v(\theta)$. If the outflow rate $f_{uv}^-(\theta)$ of an incoming edge is strictly smaller than the push rate $b_{uv}^-(\theta)$, this edge is throttled implying $c_v(\theta) < 1$, which means that there is spillback at $v$. In this case the no-slack condition ensures that there is a reason for the spillback in form of an outgoing exhausted edge $vw$: $f_{vw}^+(\theta) =
b_{vw}^+(\theta)$.
The spillback factor will play an important role throughout this paper. For more details and further intuition on the definitions of a feasible flow over time in this setting we refer to~\cite{sering19}.

\paragraph{Nash flows over time.}
In order to define Nash flows over time we need to define the arrival time of every particle of the flow. To simplify the notation we identify every particle with the point in time $\theta$ when it enters the network at the source.
For every edge $e$ we define the \emph{waiting time function} $q_e \colon \R_{\geq 0} \to \R_{\geq 0}$ by $q_e(\theta) \coloneqq \min \Set{q \geq 0 | \int_{\theta + \tau_e}^{\theta + \tau_e + q} f_e^-(\xi) \diff \xi= z_e(\theta + \tau_e) }$, i.e., $q_e(\theta)$ denotes the time a particle entering $e$ at time $\theta$ waits in the queue. For every vertex $v$ the \emph{earliest arrival time function} $\l_v \colon \R_{\geq 0} \to \R_{\geq 0}$ denotes the earliest point in time the particle $\theta$ (which enters the network at time $\theta$) can reach~$v$: 
\begin{align*}
\l_v(\theta) \coloneqq \begin{cases}
\quad \theta & \text{if } v = s,\\
\min\limits_{e = uv \in E} \l_u(\theta) + \tau_e + q_e(\l_u(\theta)) & \text{else.}
\end{cases}\end{align*}
For a given particle $\theta$ the \emph{current shortest path network} $G'_\theta = (V, E'_\theta)$ is the network of all edges $e = uv$ that are \emph{active for $\theta$}, i.e., for which $\l_v(\theta) = \l_u(\theta) + \tau_e + q_e(\l_u(\theta)).$
It contains all $s$-$v$-paths that particle $\theta$ can use to be at $v$ at the earliest possible point in time.
Furthermore, we denote the \emph{resetting edges} $E^*_\theta$ as the set of edges for which particle~$\theta$ encounters a queue when taking a current shortest path and $\bar E_\theta$ denotes the set of edges which are full when particle $\theta$ reaches its tail. More precisely, $E^*_\theta \coloneqq \set{ e = uv \in E | q_e(\l_u(\theta)) > 0}$ and $\bar E_\theta \coloneqq \set{e = uv \in E | d_e(\l_u(\theta)) = \sigma_e}$.

\begin{definition}[Nash flow over time]
We call a feasible flow over time $f$ a \emph{Nash flow over time}, or \emph{dynamic equilibrium}, if almost every particle uses a current shortest $s$-$t$-path, i.e., if $f_e^+(\theta) > 0$ implies $\theta \in \l_u(\Theta_e)$ for all $e = uv \in E$ and almost all $\theta \in \R_{\geq 0}$,
where $\Theta_e \coloneqq \set{\theta \in \R_{\geq 0} | e \in E'_\theta}$ denotes 
all particles for which $e$ is active.
\end{definition}

Equivalently, it has been shown~\cite[Lemma 4.1]{sering19} that a feasible flow over time is a dynamic equilibrium if and only if
$F^+_e(\l_u(\theta)) = F^-_e(\l_v(\theta))$ for all $e = uv \in E$ and  all $\theta \in \R_{\geq 0}$.
By setting $x_e(\theta) \coloneqq F^+_e(\l_u(\theta)) = F^-_e(\l_v(\theta))$ we observe that $(x_e(\theta))_{e \in E}$ form a static $s$-$t$-flow of value $r_0 \cdot \theta$. Since the $x_e$ are absolute continuous, their derivatives
\begin{equation}\label{eq:thin_flows}
x'_e(\theta) = f_e^+(\l_u(\theta)) \cdot \l'_u(\theta) = f_e^-(\l_v(\theta)) \cdot \l'_v(\theta)
\end{equation}
exist almost everywhere and can be seen as the strategy of particle $\theta$ (as for every $\theta$ it is a static $s$-$t$-flow of value $r_0$). For a fixed $\theta$ the derivatives $x'_e \coloneqq x'_e(\theta)$ and $\l'_v \coloneqq \l'_v(\theta)$ together with the spillback factors $c_v \coloneqq c_v(\l_v(\theta))$ are called \emph{spillback thin flows} and with $b_e^+ \coloneqq b_e^+(\l_u(\theta))$ for all $e = uv$ satisfy the following equations:
\begin{alignat*}{4}
\l'_s &= 1, &&\\
\l'_v &= \min_{e = uv \in E'_\theta} \rho_e\left(\l'_u, x'_e, c_v\right) \quad&& \text{ for } v \in V \setminus \set{s},\\
\l'_v &= \rho_e\left(\l'_u, x'_e, c_v\right)  && \text{ for } e = uv \in E'_\theta \text{ with } x'_e > 0,\\
\l'_v &\geq \max_{e = vw \in E'_\theta} \frac{x'_e}{b_e^+} && \text{ for } v \in V,\\
\l'_v &= \max_{e = vw \in E'_\theta} \frac{x'_e}{b_e^+} && \text{ for } v \in V \text{ with } c_v < 1,
\end{alignat*}
where \vspace{-0.5cm}
\[\rho_e(\l'_u, x'_e, c_v) \coloneqq \begin{cases}
\frac{x'_e}{c_v \cdot \nu_e^-} & \text{ if } e = uv \in E_\theta^*,\\
\max \Set{\l'_u, \frac{x'_e}{c_v \cdot \nu_e^-}} & \text{ if } e = uv \in E'_\theta \setminus E_\theta^*.
\end{cases}\]

It turns out that the particles of a Nash flow over time $f$ can be divided into intervals, so called \emph{phases},
for which the derivatives (and thus the inflow and outflow rates) stay constant. We denote the set of phases by $\phases$. The transition points between two
phases correspond to one or multiple \emph{events}: A new edge (and therefore new
$s$-$t$-paths) can become active, a queue can deplete, an edge can become full or the outflow rate (and hence the inflow
bound) of a full edge might change. Note however, that an event at edge $e = uv$ for a particle $\theta$ does not happen
at time $\theta$ itself but rather at time $\l_u(\theta)$ when the particle entering the network
at time $\theta$ reaches vertex $u$ (while taking a shortest $s$-$u$ path).

\paragraph{Games, optimal flows and the price of anarchy.}
For a \emph{temporal routing game} we consider a finite volume of flow $M \in (0, \infty)$ entering the network. For a Nash flow over time $f$ the last particle enters the network at time $\frac{M}{r_0}$ and leaves the network at time $\l_t(\frac{M}{r_0})$. As the network satisfy the first-in-first-out-principle (FIFO), $\l_t$ is non-decreasing, which means that $\comp{f} \coloneqq \l_t(\frac{M}{r_0})$ denotes the \emph{completion time} when the entire flow of volume $M$ has reached $t$. Most of the time we identify a network $\Gamma$ with its corresponding temporal routing game (i.e., $\Gamma$ and $M$).
%
In contrast to dynamic equilibria, optimal \emph{quickest flows} can be computed by determining a time horizon $T$ with $\l_t(\frac{M}{r_0}) = T$ by applying a binary search framework to the maximum flow over time problem. Hereby, a maximum flow over time for time horizon $T$ can be constructed via a feasible static flow $\fh$ maximizing $T \cdot \abs{\fh} - \sum_{e \in E} \tau_e \cdot \fh_e$. This \emph{underlying static
flow} $\fh$ is then temporally repeated, which means a rate of $\fh_p$ is sent into every $s$-$t$ path $p \in \Paths_{st}$ over time $[0, T - \tau_p]$. The arrival time of the last particle at $t$ (i.e., the optimal completion time) is denoted by $\opt(M)$. For more details on optimal flows over time we refer to Skutella's survey~\cite{skutella09}.

In this paper we consider the \emph{time price of anarchy} (which we simply refer to as ``price of anarchy''). For a given temporal routing game $\Gamma$ it measures the worst ratio between the arrival time at $t$ for the last particle in a Nash flow over time and the arrival time in an optimal flow: $\mtpoa{\Gamma} \coloneqq \frac{\comp{EQ}(\Gamma)}{\opt(\Gamma)}$.\footnote{All results from Section \ref{sec:upper} can also be translated to the \emph{total delay price of anarchy} measuring the arrival times of all particles combined, similarly as is done in~\cite{bhaskar15}.
}
As it is unknown whether the arrival time functions $\l_t$ are unique over all Nash flows over time, we need to consider the worst dynamic equilibrium, i.e., $\comp{EQ}(\Gamma) \coloneqq \sup_{f\in\allf{\Gamma}} \comp{f}$, where $\allf{\Gamma}$ denotes the set of Nash flows over time in $\Gamma$.

\paragraph{Further notation.}
We enumerate the event points by the order of their occurrence seen by particles at the source, i.e., $\theta_i < \theta_{i + 1}$ and say phase $i$ is given by $(\theta_{i -1}, \theta_i)$ (using $\theta_0 = 0$).\footnote{We imagine $i$ as a natural number. But since it is an open
question whether the event point converges to a finite limit, it is possible to expand the index set to the ordinal numbers up to $\omega^\omega$. In this case the $i$-th phase should be defined as $(\theta_i, \theta_{i + 1})$ as it is not possible to determine a predecessor of an ordinal number. For the sake of simplicity however, we stick to the definition where $(0, \theta_1)$ is the first phase.}
In addition, we consider the point in time $\frac{M}{r_0}$ when the last particle enters the network as the last event $r$, i.e., $\theta_r \coloneqq \frac{M}{r_0}$.
Since the edge sets $E'_\theta$, $E^*_\theta$, $\bar E_\theta$, the inflow bound, and hence, the spillback thin flow stay constant within each phase $i$ we use the following notation for $\theta \in (\theta_{i - 1}, \theta_i)$
\[\cspn{i}\!\coloneqq\!\cspn{\theta}, \hfill \eact{i}\!\coloneqq\!\eact{\theta}, \hfill \eres{i}\!\coloneqq\!\eres{\theta}, \hfill \efull{i}\!\coloneqq\!\efull{\theta}, \hfill \xp{i}\!\coloneqq\!\xp{}(\theta), \hfill \lpp{i}{v}\!\coloneqq\! \lp{v}(\theta), \hfill c_{i,v} \!\coloneqq\! c_v(\l_v(\theta)).\]
The inflow at the sink is also constant in a phase. We denote this by the \emph{capacity} $\ca{i}\coloneqq\fp{t}(\l_t(\theta))$ for some $\theta \in (\theta_{i -1}, \theta_i)$ where we use $\fp{t}(\theta) \coloneqq \sum_{vt \in \inset{t}} \fm{vt}(\theta)$. Finally, the derivatives of the waiting times $(q_e(\l_u(\theta)))'$ stay constant within a phase as they are either $\l'_v(\theta) - \l'_u(\theta)$ if $e = uv$ is active or $0$ otherwise. For $\theta \in (\theta_{i - 1}, \theta_i)$ we write $q'_{i, e} \coloneqq (q_e(\l_u(\theta)))'$ and $\qpp{i}{p} \coloneqq \sum_{e\in p} \qpp{i}{e}$ for an $s$-$t$ path $p$.

\section{Lower Bounds on the Price of Anarchy}\label{sec:lower}
We first show in \ref{subsec:unbounded_poa} that the PoA can be unbounded even on very simple graphs (an observation first made in~\cite{sering19}) and that the same is true for graphs with unit capacities. Afterwards, in \ref{subsec:braess_ratio} and \ref{subsec:spillback_faster}, we use similar constructions to investigate the Braess paradox for flows over time with spillback and to show that there exist networks on which Nash flows over time with spillback are faster than their respective counterparts without spillback.

\paragraph{PoA depending on graph structure or capacities.} \label{subsec:unbounded_poa}
Consider the network $\Gamma$ given in Figure \ref{fig_unbounded} and a Nash flow over time $f$ of it. Since in the first phase the shortest path is $(e_1,e_2)$, edge $e_2$ fills up quickly. Once this happens flow already queues up at the end of edge $e_1$, and thus, $e_3$ is never used by $f$. An optimal flow can use $e_3$ and therefore routes flow to the sink much faster, resulting in an unbounded PoA. This construction can easily be generalized to all graphs that have the graph given in Figure \ref{fig_unbounded} as a minor. 

\ifarxiv
The complete calculation for this as well as all skipped proofs of this section can be found in Appendix \ref{app:lower}.
\fi

\begin{restatable}{theorem}{unbounded}\label{thm_unbounded}(cf.~\cite[introductary example]{sering19})
Let $G$ be any graph that has the graph given in Figure \ref{fig_unbounded} as a minor, then there exists a temporal routing game $\Gamma$ on $G$ with
$\mtpoa{\Gamma} \in \Omega(\frac{1}{\num{\text{min}}})$
where $\num{\text{min}} \coloneqq \min_{e\in E} \{\num{e} : \num{e} > 0 \}$.
\end{restatable}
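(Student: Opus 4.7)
The plan is to exhibit a specific instance on the three-edge graph of Figure~\ref{fig_unbounded} in which any Nash flow over time is forced to route its entire volume through a single bottleneck edge of outflow capacity $\num{\text{min}}$, while the optimum routes around this bottleneck in constant time. Label the edges $e_1, e_2, e_3$ as in the figure; I set $\num{e_2} \coloneqq \num{\text{min}}$, give $e_2$ a small storage capacity $\sigma_{e_2}$ just above $\nup{e_2}\cdot\tau_{e_2}$, and choose $\tau_{e_3}$ large so that the route through $(e_1,e_2)$ is strictly shorter than the alternative through $e_3$ at time $0$, while $e_1$ and $e_3$ receive generous in- and outflow capacities. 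The inflow rate $r_0$ and total volume $M$ are fixed constants independent of $\num{\text{min}}$, chosen large enough that queueing on $e_2$ begins immediately.

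The core step is the analysis of a Nash flow over time $f$. At $\theta = 0$ the unique shortest $s$-$t$-path is $(e_1,e_2)$, so all flow enters $e_1$. The arrival rate at $v$ greatly exceeds $\num{e_2}$, a queue builds on $e_2$, and after a short transient $e_2$ reaches its storage capacity $\sigma_{e_2}$. From that moment, spillback at $v$ reduces the effective outflow capacity of $e_1$ and a queue forms at the head of $e_1$ as well. The key claim, to be proved inductively over the phases of $f$, is that $e_3 \notin \eact{i}$ for every phase $i$, so $\fp{e_3} \equiv 0$. This reduces to checking that the travel time along $(e_1,e_2)$ never exceeds the travel time along $e_3$: once $e_2$ is full its queue length is bounded by $\sigma_{e_2}/\num{e_2}$, which by the choice of $\sigma_{e_2}$ is strictly below $\tau_{e_3} - \tau_{e_2}$; and the queue on $e_1$ is already absorbed into $\l_v(\theta)$ before the comparison between $e_2$ and $e_3$ takes place. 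Consequently all $M$ units of flow must drain through $e_2$ at rate $\num{\text{min}}$, giving $\comp{EQ}(\Gamma) \geq M/\num{\text{min}}$.

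For the optimum I construct a feasible flow by temporally repeating a static $s$-$t$-flow that routes nearly all volume through the high-capacity edge $e_3$. With the generous capacities of $e_1$ and $e_3$ this yields $\opt(\Gamma) \leq C$ for some constant $C$ depending only on $M$, the fixed transit times, and the capacities of $e_1, e_3$, and crucially not on $\num{\text{min}}$. Combining the two bounds gives $\mtpoa{\Gamma} \geq M/(C \cdot \num{\text{min}}) \in \Omega(1/\num{\text{min}})$.

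To lift the construction to an arbitrary graph $G$ that has the three-edge graph as a minor, I replace each $e_i$ by a corresponding path in $G$ given by the minor embedding, place a bottleneck edge carrying the parameters of $e_i$ on each such path, assign negligible transit times and abundant capacities to the remaining edges on the lifted paths, and give prohibitively large transit times to all other edges of $G$ so that they never join any current shortest path. The same completion-time analysis then yields the bound on the PoA. The main obstacle is the inductive verification that $e_3$ stays inactive throughout $f$: one has to rule out that the two-way interaction between spillback on $e_1$ and the $e_2$-queue causes a subtle phase oscillation in which the $(e_1,e_2)$-travel time temporarily rises above $\tau_{e_3}$. Once this monotonicity is secured, the throughput lower bound and the minor lifting are routine.
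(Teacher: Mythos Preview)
Your approach is essentially the paper's: force all equilibrium flow through the bottleneck $e_2$ by arranging that $e_2$ fills up before the $(e_1,e_2)$-travel time reaches $\tau_{e_3}$, so that spillback freezes the $e_2$-waiting time below $\tau_{e_3}-\tau_{e_2}$ and $e_3$ never becomes active, then compare against an optimum that bypasses via $e_3$. The paper simply plugs in explicit constants ($r_0=3$, $\tau_{e_3}=2$, $\num{e_2}=\epsilon$) and reads off the completion times directly, which also dissolves your flagged ``oscillation'' obstacle: once $e_2$ is full its inflow equals its outflow $\num{e_2}$, so the $e_2$-queue and hence the waiting time are constant thereafter, and no induction over phases is required.
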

\begin{figure}[t]
\begin{center}
\includegraphics{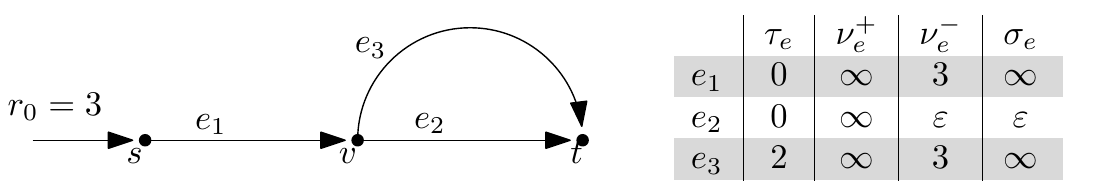}
\caption[Nash flows with spillback have unbounded price of anarchy]{This is a network on which Nash flows over time with spillback have unbounded price of anarchy (see Theorem \ref{thm_unbounded}). A similar example was first given in~\cite[Figure~2]{sering19}.} 
\label{fig_unbounded}
\end{center}
\end{figure}

To avoid the above unboundedness one could ask for the PoA for temporal routing games on graphs with restricted edge capacities. By constrictions of the model we have to set the inflow and storage capacities of all edges $e\in\outset{s}$ to $\nup{e} > r$ and $\sigma_e = \infty$, respectively. We say a network has \emph{unit edge capacities} if for all edges $e\notin\outset{s}$ it holds that $\nup{e} = \num{e} = \sigma_e = 1$ and further for all edges $e\in\outset{s}$ also $\num{e}=1$. Unfortunately, even when restricting to networks with unit edge capacities the PoA is unbounded.

\begin{restatable}{theorem}{capacities}\label{thm_unit_capacities}
There exists a family of networks with unit edge capacities and
$\tau_e \in\{0,1\}$ for all edges
for which the PoA is linear in the number of edges.
\end{restatable}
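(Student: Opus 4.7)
The plan is to extend the construction of \Cref{thm_unbounded} to unit capacities by using a chain of $k$ spillback gadgets in series in place of a single edge with tiny $\nu^-$. The construction will produce a network with $\Theta(k)$ edges and unit edge capacities in which the worst Nash flow over time suffers an accumulated delay of $\Omega(k)$, while the optimal flow bypasses the entire chain in $O(1)$ time.

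For each gadget I will use a two-edge structure consisting of a short direct edge (with $\tau = 0$ and unit storage) and a longer detour. Because the storage capacity of a zero-length unit-capacity edge equals one, whenever the inflow into the short edge exceeds its unit outflow capacity the edge fills up in constant time and spillback is triggered at its tail. By the no-slack condition this spillback propagates to the preceding gadget in the chain, forcing flow to migrate onto the detour. Choosing detour transit times in $\{0,1\}$ so that each detour adds a fixed positive length to every path using it, the Nash flow over time accumulates delay proportional to the number of gadgets traversed.

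To obtain an optimum with small completion time I will augment the chain with a parallel shortcut consisting of a single short path that uses only unit-capacity edges outside $\outset{s}$ (recall that edges leaving $s$ are exempted from the unit-capacity restriction). By routing a suitable share of the flow onto the shortcut, an optimal flow over time achieves completion time $O(1)$ independent of $k$. A worst Nash flow over time on the other hand initially prefers the topologically shorter chain of zero-length edges and only switches to the shortcut once spillback has cascaded through all gadgets, yielding completion time $\Omega(k)$.

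The main obstacle is to verify that the spillback cascade indeed propagates across all $k$ gadgets and that the delays add up additively rather than cancelling out once flow begins to use detours. I plan to address this by an inductive analysis, showing that in each phase at most one new gadget becomes fully congested (in the sense that its short edge is exhausted) and that this event contributes an $\Omega(1)$ additive increment to the earliest arrival time $\l_t$ at the sink, using the spillback thin-flow equations recalled in \Cref{sec:model}. A secondary point is to handle the supremum over Nash flows over time implicit in $\comp{EQ}(\Gamma)$: I intend to exhibit an explicit equilibrium attaining the claimed completion time, which already suffices as a lower bound on $\mtpoa{\Gamma}$.
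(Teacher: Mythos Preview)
Your plan diverges from the paper's and, as stated, contains a structural gap. The paper's construction is far simpler than a cascade of gadgets: it takes the network of \Cref{thm_unbounded} and replaces the high-capacity edges $e_1$ and $e_3$ by bundles $E_1$, $E_3$ of $k$ and $k-1$ parallel unit-capacity edges (and sets $r_0=k$). The bottleneck is then the single unit-capacity edge $e_2$, fed by the $k$ parallel edges of $E_1$; it fills up immediately, the $E_3$ detours (with $\tau=1$) never become active in the equilibrium, and the sink receives rate~$1$ forever. The quickest flow, by contrast, uses all $k-1$ edges of $E_3$ and finishes in $O(1)$ time. With $M=k+2$ one gets $\mtpoa{\Gamma}\geq (k+2)/3$.

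The gap in your approach is that a \emph{series} chain of unit-capacity gadgets cannot generate the congestion you need. Recall the unit-capacity convention: every edge outside $\outset{s}$ has $\nup{e}=\num{e}=\sigma_e=1$, and edges in $\outset{s}$ have $\num{e}=1$ as well. Hence any vertex that is reached through a single preceding edge receives inflow at rate at most~$1$, and the short edge of the next gadget (with $\nup{}=\num{}=1$) carries that flow with no queue and no spillback. Spillback can only be triggered where the \emph{aggregate} desired inflow into an edge exceeds its capacity, i.e., at a merge where several unit-capacity edges feed fewer unit-capacity edges. Your linear chain has no such merge, so the ``cascade'' never starts; the inductive step you propose (each gadget contributes $\Omega(1)$ delay) fails at the base case.

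The missing idea is precisely the one the paper exploits: simulate a large capacity by a \emph{parallel} bundle of unit edges, creating a $k$-into-$1$ merge at a single bottleneck edge. Once you see that, the whole chain-of-gadgets machinery becomes unnecessary.
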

This can be seen by considering the network given in Figure \ref{fig_unbounded} and exchanging $e_1$ and $e_3$ with bunches of unit-capacity parallel edges and setting $\num{e_2} = \sigma_{e_2} = 1$. If we use enough parallel edges we can generate a similar flow behavior as we encountered when lowering the capacity of edge $e_2$ in the proof of \Cref{thm_unbounded}.

Nevertheless, we show another way of constraining edge capacities to achive an interesting upper bound on the PoA in Section \ref{subsec:saturated}.

\paragraph{Braess ratio.} \label{subsec:braess_ratio}
In his work on selfish routing with static flows~\cite{braess68} Braess showed that there are networks where adding an edge can paradoxically increase congestion leading to a worse equilibrium. In line with the paper of Macko et al.~\cite{macko10} we define the \emph{Braess ratio} for flows over time with spillback as follows.
Let $\Gamma$ be a temporal routing game on a graph $G$ and let $\Gamma(H)$ be the same instance restricted to some subgraph $H\subseteq G$. Then the Braess ratio of $\Gamma$ is
\begin{equation*}
\br{\Gamma} = \max_{H\subseteq G} \frac{\eq(\Gamma)}{\eq(\Gamma(H))}.
\end{equation*}
We say graph $G$ \emph{admits a Braess paradox} if there is a temporal routing game $\Gamma$ on $G$ with $\br{\Gamma} > 1$.
In~\cite{macko10} it is shown that the Braess ratio for flows over time without spillback (for a slightly different cost function instead of the last completion time) is arbitrarily large depending linearly on the number of edges of the underlying graph. The authors furthermore show that a graph $G$ or its transpose (the graph where every edge $uv$ is replaced by the edge $vu$ and $s$ and $t$ are swapped)
admit a Braess paradox if and only if $G$ contains at least one of the following graphs as a topological minor.
\begin{figure}[h]
\begin{center}
\includegraphics{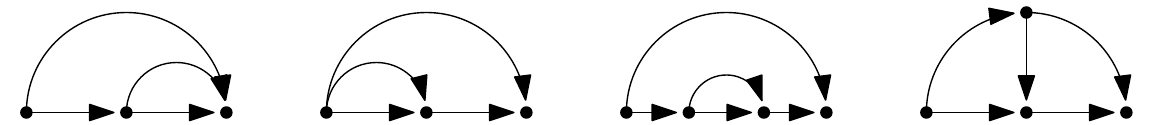}
\end{center}
\end{figure}

When considering flows over time with spillback and the graph in Figure \ref{fig_unbounded} it is easy to see that this graph admits a Braess paradox with arbitrarily large Braess ratio even though it does not have one of the graphs above as a topological minor (and neither does its transpose). 
To see this choose $H$ to be the subgraph where from the graph in Figure \ref{fig_unbounded} we delete edge $e_{2}$.

\begin{corollary}\label{thm_braess}
For any $a \in \R$ there exists a temporal routing game $\Gamma$ on the graph given in Figure \ref{fig_unbounded} such that the Braess ratio satisfies $\br{\Gamma} > a$.
\end{corollary}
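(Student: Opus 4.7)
The plan is to reuse the family of instances $\Gamma$ from the proof of Theorem \ref{thm_unbounded}, in which $\num{e_2}$ can be made arbitrarily small while all other network data remain fixed, and to let $H$ be the subgraph obtained by deleting $e_2$. The key point is that $\eq(\Gamma(H))$ stays bounded independently of $\num{e_2}$, while $\eq(\Gamma)$ blows up like $1/\num{e_2}$, so the Braess ratio becomes arbitrarily large.

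For the numerator, since the last particle only enters at time $M/r_0$ one always has $\opt(\Gamma)\geq M/r_0 > 0$, so the lower bound $\mtpoa{\Gamma}\in\Omega(1/\num{e_2})$ from Theorem \ref{thm_unbounded} immediately gives $\eq(\Gamma) = \mtpoa{\Gamma}\cdot\opt(\Gamma) \in \Omega(1/\num{e_2})$. For the denominator, after removing $e_2$ the graph of Figure \ref{fig_unbounded} has essentially a unique $s$-$t$ route (through $e_3$), so in $\Gamma(H)$ no spillback can occur and the Nash flow over time is forced to coincide with the quickest flow on $H$; hence $\eq(\Gamma(H)) = \opt(\Gamma(H)) =: C$ is a constant that depends only on the (fixed) transit times and capacities of the edges of $H$ together with $M$ and $r_0$. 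Combining the two bounds,
\[
\br{\Gamma} \;\geq\; \frac{\eq(\Gamma)}{\eq(\Gamma(H))} \;\in\; \Omega\!\left(\frac{1}{C\cdot \num{e_2}}\right),
\]
so picking $\num{e_2}$ small enough makes the right-hand side exceed any prescribed $a \in \R$.

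The only point that requires a brief sanity check is that the instance of Theorem \ref{thm_unbounded} really does allow $\num{e_2}$ to be tuned independently of the other parameters while preserving the model constraint $\sigma_{e_2} > \nup{e_2}\cdot\tau_{e_2}$; since only an outflow capacity is being shrunk, this constraint is preserved automatically and the argument goes through.
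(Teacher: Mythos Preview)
Your proposal is correct and takes essentially the same approach as the paper: choose $H$ to be the subgraph obtained by deleting $e_2$, so that $\eq(\Gamma(H))$ is a fixed constant (there is a unique $s$-$t$ route, hence Nash and quickest flow coincide) while $\eq(\Gamma)=M/\num{e_2}$ blows up as $\num{e_2}\to 0$. The paper states this without the additional justification you supply; your added remarks about $\eq(\Gamma(H))=\opt(\Gamma(H))$ and the preservation of the storage constraint are correct and only make the argument more explicit.
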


\paragraph{Spillback can improve completion time.} \label{subsec:spillback_faster}
The following proposition shows that there are temporal routing games where Nash flows with spillback perform better than Nash flows without spillback. This might at first be surprising, as spillback seems to only be obstructive to routing flow fast. But it is indeed possible to construct networks where spillback leads to shorter completion times. 
In the network depicted in \Cref{fig_fast_spillback} there are two parallel edges, namely $e_3$ and $e_{3^\prime}$, for which it holds that the completion time of a Nash flow is worse if the edges are present compared to the same network without those edges. 
We exploit this in our construction: In the spillback model one of these `bad' edges becomes full nearly instantaneously  yielding the other `bad' edge to never get active. Thus, the spillback Nash flow routes flow only over one of those `bad' edges. Since in the Koch-Skutella model without spillback both of these parallel edges get active at some point, the Nash flow over time here uses both of them resulting in a worse completion time. 
\ifarxiv
For a thorough calculation of the phases of the two Nash flows and the ensuing completion times we refer to the detailed proof in the appendix.
\fi

\begin{restatable}{proposition}{propfastspillback}\label{prop_fast_spillback}
In the network $\Gamma$ given in \Cref{fig_fast_spillback} the completion time of any Nash flow over time with spillback is less than the completion time of the Nash flow over time without spillback on the same network using $\nu_e\coloneqq\min\{\nup{e},\num{e}\}$.
\end{restatable}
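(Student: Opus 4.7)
The plan is to analyze the two Nash flows over time, with and without spillback, by tracking their phase structures side by side and comparing the resulting completion times $\comp{EQ}(\Gamma)$. Since (without access to the figure) the construction is a symmetric two-terminal network with two parallel "bad" edges $e_3,e_{3'}$ that would each individually slow a dynamic equilibrium down, the strategy is to use spillback to deactivate one of them for the entire duration of the flow, so that the equilibrium effectively behaves as if only one of the two bad edges existed.

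\textbf{Step 1 (setup and symmetry breaking).} First I would identify a symmetric optimal solution of the spillback thin-flow equations in the initial phase; by choosing any Nash flow over time, I may assume without loss of generality that the symmetry breaks in favour of $e_3$, i.e.\ that $e_3$ fills up strictly before $e_{3'}$. Using the assumption that $e_3$ has small storage capacity $\sigma_{e_3}$ together with a strictly positive inflow at its tail in the first phase, I would show that the event in which $e_3 \in \efull{}$ occurs at some event point $\theta_{i^*}$ that can be computed directly from the thin-flow rates and the length of the first phase.

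\textbf{Step 2 (spillback model: deactivating $e_{3'}$).} Once $e_3$ is full, the inflow bound $b_{e_3}^+$ drops to the outflow rate $f^-_{e_3}$ and the spillback factor at its tail becomes strictly smaller than $1$. I would then exploit the term $\rho_e(\l'_u,x'_e,c_v)$ in the thin-flow equations to argue that, along the $s$-$t$ path through $e_3$, the derivative $\l'_v$ at the common tail of $e_3$ and $e_{3'}$ grows large enough that the earliest arrival time at the head of $e_{3'}$ via the path not using $e_{3'}$ remains strictly smaller than via $e_{3'}$. Consequently $e_{3'}\notin \cspn{i}$ for every $i>i^*$, and the Nash flow over time never routes flow along $e_{3'}$. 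From here the phase evolution on the reduced network can be solved explicitly (it is a single-chain calculation), yielding a completion time $\comp{EQ}(\Gamma)$ that I would compute in closed form.

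\textbf{Step 3 (no-spillback model: both bad edges active).} In the Koch–Skutella model without spillback, no edge ever becomes full: queues grow without any bound on edge load, and all $b_e^+$ stay at $\nu_e$. I would re-run the thin-flow analysis, with the capacity $\nu_e=\min\{\nup e,\num e\}$ prescribed by the proposition, and observe that after a well-defined event point the second parallel edge $e_{3'}$ becomes active, because once the queue on $e_3$ has grown sufficiently its transit time equals that of the alternative route through $e_{3'}$. From that phase onwards both bad edges carry flow, so the terminal outflow is strictly smaller than in the spillback case (where only the single edge $e_3$ acts as bottleneck that has nonetheless smaller effective throughput than $e_3$ and $e_{3'}$ combined would have — this is why it is surprising). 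The calculation yields a strictly larger $\comp{EQ}$ in the no-spillback model.

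\textbf{Step 4 (comparison).} Finally I would plug in the concrete parameter values from the figure and verify that the closed-form completion time from Step~2 is strictly smaller than the one from Step~3. The hard part of the argument is Step~2: showing rigorously that, once $e_3$ is full, the spillback-induced increase of $\l'_v$ at the common tail really is enough to keep $e_{3'}$ out of $\cspn{i}$ for all subsequent phases, rather than having $e_{3'}$ re-enter the current shortest path network once queues elsewhere equalise travel times. All other steps are routine thin-flow computations given the phase structure.
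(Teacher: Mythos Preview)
Your overall strategy matches the paper's: compute the phase structure of the Nash flow in each model, show that in the spillback model $e_3$ becomes full early and this keeps $e_{3'}$ permanently inactive, while in the Koch--Skutella model both parallel edges eventually become active, and then compare the resulting completion times. The paper carries this out by tabulating the thin flows phase by phase and reading off $\l_t(\theta)=\theta+11$ for $\theta>2$ in the spillback model versus $\l_t(\theta)=\theta+11.66$ for $\theta>5.33$ in the no-spillback model, so the claim holds for all $M>\tfrac{40}{3}$.

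There is, however, a genuine gap in your Step~3. You assert that once both bad edges are active ``the terminal outflow is strictly smaller than in the spillback case.'' The paper's computation shows the opposite: in the final phase of \emph{both} models one has $\l'_t=1$, i.e.\ the long-run inflow rate at $t$ equals $r_0$ in each. The difference lies entirely in the accumulated delay constant ($11$ versus $11.66$), not in throughput. The reason the no-spillback equilibrium is worse is Braess-like: activating $e_{3'}$ reroutes flow so that additional queues build up elsewhere during the transient, and by the time steady state is reached the last particle has collected more waiting time. Your own parenthetical (``$e_3$ alone has smaller effective throughput than $e_3$ and $e_{3'}$ combined'') already contradicts the conclusion you draw from it, which signals that the mechanism has been misidentified. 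To make the argument go through you need to track the queue growth on the \emph{other} edges of the network once $e_{3'}$ activates, not compare throughput across $e_3$ versus $e_3\cup e_{3'}$.

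Two smaller points. First, the two parallel edges in the construction have different transit times, so $e_3$ activates first deterministically and no symmetry-breaking argument is needed; your Step~1 can be dropped. Second, the paper's proof only establishes the inequality for $M>\tfrac{40}{3}$; you should expect the same restriction to fall out of your computation.
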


\begin{figure}[t]
\begin{center}
\includegraphics{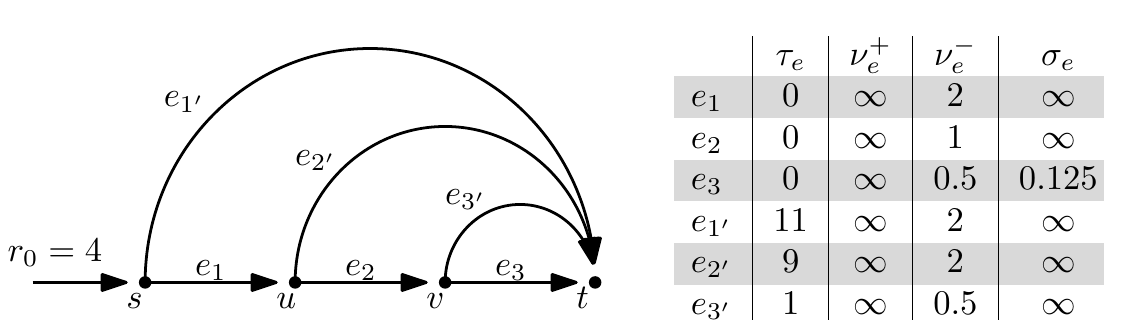}
\caption[Nash flows with spillback can be faster than Nash flows without spillback]{This example shows that Nash flows over time with spillback can be faster than Nash flows over time without spillback, see Proposition \ref{prop_fast_spillback}.} 
\label{fig_fast_spillback}
\end{center}
\end{figure}

\section{Upper Bounds on the Price of Anarchy}\label{sec:upper}
In the following we prove two upper bounds on the price of anarchy. First, 
we show that for a single, fixed network the PoA is bounded by a constant in the long run. After that 
we show that if for a given network we are allowed to decrease the outflow capacities by a certain amount then the PoA only depends on the worst spillback factor of the Nash flows over time.

\subsection{Price of Anarchy for a fixed Network} \label{subsec:poa_for_fixed_networks}
Until now we have studied the PoA depending on the structure of the underlying graph or its capacities. For both questions we constructed 
games satisfying strong constraints that still have unbounded PoA.
Now we are interested in the PoA of a network where every parameter is fixed except for the target amount $M$, i.e., we ask the question of how the PoA behaves in the long run on a single network. 

\begin{restatable}{lemma}{asymptoticOptimal}\label{lem_asymptotic_optimal}
For a temporal routing game $\Gamma$ on a fixed network the completion time of the optimal flow depending on the target amount $M$ is bounded by 
$\opt(M) \in \Theta(M)$.
\end{restatable}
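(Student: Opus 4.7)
The plan is to prove matching lower and upper bounds $\Omega(M)$ and $\bigO(M)$ on $\opt(M)$ separately.

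For the lower bound $\opt(M) \in \Omega(M)$: By the flow conservation constraint at the source $s$, every feasible flow over time has total inflow rate exactly $r_0$ into the network as long as particles are still entering. Hence the total volume having entered the network by time $\theta$ is at most $r_0 \cdot \theta$, so the last of the $M$ units does not enter before time $M/r_0$ and, a fortiori, cannot reach $t$ any earlier. This immediately yields $\opt(M) \geq M/r_0 \in \Omega(M)$.

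For the upper bound $\opt(M) \in \bigO(M)$: I would exhibit one concrete feasible flow over time whose completion time is linear in $M$; since $\opt(M)$ is the minimum completion time over all feasible routings of volume $M$, this suffices. The natural candidate is a temporally repeated flow (as recalled in Skutella's survey~\cite{skutella09}) built from a feasible static $s$-$t$-flow $\fh$ of value $r_0$ with $\fh_e \leq \min\{\nup{e}, \num{e}\}$ on every edge. Existence of such an $\fh$ is implicit in the setup: the temporal routing game is well-posed only if the network can sustain the source inflow rate $r_0$, and a static flow of value $r_0$ respecting the per-edge inflow and outflow capacities then exists by standard max-flow arguments. For this $\fh$ the resulting flow over time never produces a queue and never fills an edge, so the feasibility conditions of the spillback model (inflow condition, fair allocation with all $c_v = 1$, vacuous no-slack, empty set of full edges avoiding any deadlock) hold trivially. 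By the standard volume formula for temporally repeated flows, this flow routes a volume of $T \cdot r_0 - \sum_{e \in E} \tau_e \fh_e$ by time $T$, so choosing
\[T \;=\; \frac{M}{r_0} + \frac{1}{r_0}\sum_{e \in E}\tau_e \fh_e\]
is sufficient, giving $\opt(M) \leq M/r_0 + C$ for a constant $C$ depending only on the fixed network.

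The main technical point is carefully justifying existence of the static flow $\fh$ of value $r_0$ with the required per-edge bounds, i.e.\ phrasing in what sense the network "supports" the inflow rate; once that is in place the rest is a direct computation, and combining both bounds yields $\opt(M) \in \Theta(M)$.
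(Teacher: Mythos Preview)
Your lower bound is exactly the paper's.

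For the upper bound, the gap is precisely the point you yourself flag as ``the main technical point'': a static $s$-$t$-flow of value $r_0$ respecting the per-edge bounds $\min\{\nup{e},\num{e}\}$ need \emph{not} exist, and its existence is not implicit in the setup. The model only forces the edges leaving $s$ to have $\nup{e} > r_0$ and $\sigma_e = \infty$, so that rate $r_0$ can always enter the network; nothing, however, rules out a min-cut of value strictly less than $r_0$ deeper in the graph. In that case flow simply queues up on the infinite-storage source edges and the temporal routing game remains perfectly well-posed for every finite $M$, yet no static flow of value $r_0$ exists and your witness flow cannot be constructed.

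The paper sidesteps this with a more elementary witness: take any single $s$-$t$ path $p$ (one exists since every vertex is reachable from $s$), let $\nu_p$ be the minimum of all in- and outflow capacities along $p$, and observe that routing along $p$ alone delivers $M$ units by time $\tau_p + M/\nu_p$; the paper also invokes the known fact~\cite{koch11} that the quickest flow builds no queues. No max-flow existence argument is needed. Your argument is easily repaired in the same spirit: drop the requirement $|\fh| = r_0$ and take any static $s$-$t$-flow of positive value---for instance the single-path flow of value $\nu_p$, which always exists. The temporally-repeated-flow formula then still yields $\opt(M) \le M/|\fh| + C$, linear in $M$ with a constant depending only on the fixed network.
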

This result is mainly due to the fact that the optimal flow does not build up any queues. Therefore its completion time depends mainly on $M$ and the minimum edge-capacity, which we consider to be fixed. 
\ifarxiv
The full proof as well as all skipped proofs of this section can be found in Appendix \ref{app:upper}.
\fi

For the classification of the asymptotic long term behavior of $\eq(M)$ we use the following auxiliary lemma that gives us a lower bound on the spillback factors of a Nash flow. The lemma follows by an application of~\cite[Lemma~3]{sering19}.

\begin{restatable}{lemma}{lemasymptotic}\label{lem_asymptotic_min_eps}
For a temporal routing game $\Gamma$ there exists an $\epsilon > 0$ such that for any Nash flow over time $f\in\allf{\Gamma}$ the spillback factors satisfy $\min\{c_v(\theta) : v\in V, \theta \in \R_{\geq 0}\} > \epsilon$.
\end{restatable}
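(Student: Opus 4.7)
The plan is to leverage \cite[Lemma~3]{sering19}, which controls the spillback factors within a single spillback thin flow, and combine it with a finiteness argument over the combinatorial configurations that can arise in a Nash flow over time. Recall that $c_v(\theta)$ is determined by the spillback thin flow at $\l_v(\theta)$, which in turn is fixed by the triple of subsets $(\eact{\theta}, \eres{\theta}, \efull{\theta})$ together with the current inflow bounds $b_e^+$ on full edges. Since there are only finitely many such triples (at most $2^{3|E|}$ in total), the strategy is to obtain a positive lower bound for $c_v$ in each configuration and then take the global minimum.

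First, I would apply the referenced lemma configuration by configuration to conclude that in the associated spillback thin flow one has $c_v > 0$ strictly for every $v \in V$. Since all network parameters $\tau_e$, $\nup{e}$, $\num{e}$, $\sigma_e$ are rational by assumption and the spillback thin flow equations form a linear system with rational coefficients within a fixed configuration, each $c_v$ is a rational number whose denominator is bounded in terms of the fixed network data. In particular, for each configuration there exists a positive rational $\epsilon_\text{config} > 0$ below which no spillback factor in that configuration can fall. Setting $\epsilon \coloneqq \min_\text{config} \epsilon_\text{config}$ over the finite set of configurations yields $\epsilon > 0$, and since any spillback factor appearing in any $f \in \allf{\Gamma}$ corresponds to one of these configurations, we obtain $c_v(\theta) \geq \epsilon$ for all $v \in V$ and $\theta \in \R_{\geq 0}$.

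The main obstacle is handling the dynamically determined inflow bounds $b_e^+(\theta)$ on full edges, which are not themselves network parameters but emerge from the flow's past evolution and a priori could take values in a continuum $[0, \nup{e}]$. To address this, I would observe that on a full edge $b_e^+$ coincides with the current outflow rate $f_e^-$, which within any phase is itself the output of a spillback thin flow of the same form, and hence belongs to the same finite pool of rational expressions in the network data. This keeps the underlying system of equations rational and finite throughout the entire time horizon, allowing the configuration-wise lower bound to be extended uniformly across all of $\allf{\Gamma}$.
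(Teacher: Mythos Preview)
Your finiteness strategy has a genuine gap, precisely at the obstacle you flag in the last paragraph. The spillback thin flow is \emph{not} a linear system in the unknowns $(x',\l',c)$ even after fixing $(\eact{\theta},\eres{\theta},\efull{\theta})$: the factor $c_v$ appears in the denominator of $\rho_e$, and there are additional $\max/\min$ branches. More importantly, the inflow bounds $b_e^+(\l_u(\theta))$ on full edges are determined by $f_e^-(\l_u(\theta))$, and this outflow value belongs to the particle $\theta'$ with $\l_v(\theta')=\l_u(\theta)$, i.e.\ to an \emph{earlier} phase. So the coefficients of the thin-flow system in phase $i$ are outputs of the thin-flow system of some phase $j<i$, which in turn inherits its $b^+$ from still earlier phases. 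Your fix (``$f_e^-$ is itself an output of a spillback thin flow of the same form, hence belongs to the same finite pool'') is circular: it assumes the pool is finite, but composing rational maps along a potentially unbounded chain of phases can generate infinitely many distinct rational values, and nothing in the argument prevents the resulting $c_v$'s from accumulating at $0$. Recall that finiteness of the number of phases is open in this model, so you cannot fall back on a finite phase count either.

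The paper's proof avoids configurations entirely and is purely quantitative at a single time $\theta$. If $c_v(\theta)<1$, there is a throttled edge $e\in\inset{v}$, so $c_v(\theta)\geq \fm{e}(\theta)/\num{e}$. The no-slack condition then yields an outgoing edge $e_1$ with $\fp{e_1}(\theta)=\bp{e_1}(\theta)$; iterating along full throttled edges (which form an acyclic set by the no-deadlock condition) one reaches after at most $|E|$ steps an edge $e_k$ that is not full or not throttled, where $\fp{e_k}(\theta)\geq \nu_{\min}$. Chaining the flow-conservation inequalities back gives the explicit bound
\[
c_v(\theta)\;\geq\;\left(\frac{\nu_{\min}}{\num{\Sigma}}\right)^{|E|}\cdot\frac{\nu_{\min}}{\num{\max}},
\]
which depends only on network parameters and is therefore uniform over all $f\in\allf{\Gamma}$ and all $\theta$. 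This sidesteps the cross-phase recursion that your approach cannot control.
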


To get an asymptotic bound on $\eq(M) = \sup_{f\in\allf{\Gamma}} \comp{f}(M)$ we first argue that seen as a function in $M$, $\comp{f}$ is a piece-wise linear and non-decreasing function. We can then use \Cref{lem_asymptotic_min_eps} to bound its derivative and with that obtain the desired result.

\begin{restatable}{theorem}{asymptoticEQ}\label{thm_asymptotic_eq}
For a temporal routing game $\Gamma$ on a fixed network the completion time of any Nash flow over time $f\in\allf{\Gamma}$ is bounded by $\comp{f}(M) \in \Theta(M)$.
\end{restatable}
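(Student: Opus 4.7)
The lower bound $\comp{f}(M) \in \Omega(M)$ is immediate: the last particle enters the network at time $M/r_0$ and cannot reach $t$ any earlier, giving $\comp{f}(M) = \l_t(M/r_0) \geq M/r_0$. The substance of the statement therefore lies in the upper bound $\comp{f}(M) \in O(M)$, which I plan to obtain by bounding the slope of $\l_t$ uniformly across all phases.

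Since $\l_t(\theta)$ is piecewise linear and non-decreasing in $\theta$ with slope $\lpp{i}{t}$ on phase $i$, we can write $\comp{f}(M) = \l_t(0) + \int_0^{M/r_0} \l'_t(\xi)\,d\xi$. The value $\l_t(0)$ is the length of a shortest $s$-$t$-path in transit times and is therefore a constant of the network, so it suffices to exhibit a constant $C$, depending only on the fixed network, with $\lpp{i}{t} \leq C$ for every phase $i$. By \Cref{lem_asymptotic_min_eps} there exists some $\epsilon > 0$ with $c_{i,v} > \epsilon$ for all phases $i$ and vertices $v$. Moreover, the spillback thin flow $(x'_{i,e})_{e\in E}$ is a static $s$-$t$-flow of value $r_0$, so $x'_{i,e} \leq r_0$ for every edge. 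Setting $\nu_{\min}^- := \min_{e\in E} \nu_e^- > 0$, I would argue along a used $s$-$t$-path $p$ in $\cspn{i}$ (which exists by flow decomposition): for any edge $e=uv$ on $p$, the spillback thin flow equations give $\lpp{i}{v} = \rho_e(\lpp{i}{u}, x'_{i,e}, c_{i,v}) \leq \max\{\lpp{i}{u},\, r_0/(\epsilon \, \nu_{\min}^-)\}$ in both the resetting case ($\rho_e = x'_{i,e}/(c_{i,v}\nu_e^-)$) and the non-resetting case ($\rho_e = \max\{\lpp{i}{u},\, x'_{i,e}/(c_{i,v}\nu_e^-)\}$). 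Induction along $p$ starting from $\lpp{i}{s} = 1$ then yields $\lpp{i}{t} \leq \max\{1,\, r_0/(\epsilon\, \nu_{\min}^-)\} =: C$, and integrating gives $\comp{f}(M) \leq \l_t(0) + C \cdot M/r_0 \in O(M)$.

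The main obstacle is justifying that the constant $\epsilon$ from \Cref{lem_asymptotic_min_eps} can be chosen independently of $M$, since that lemma is stated for a fixed temporal routing game (and a priori the $\epsilon$ it produces could depend on $M$). The resolution is that the Nash flow dynamics up to any time $\theta$ depend only on the network and on the flow history before $\theta$, so the restriction of a Nash flow with target amount $M' > M$ to the interval $[0, M/r_0]$ is itself a Nash flow for target $M$. Consequently a single $\epsilon$ obtained by applying \Cref{lem_asymptotic_min_eps} to a sufficiently large reference game serves uniformly for every $M$, which is what is needed to conclude $\comp{f}(M) \in \Theta(M)$.
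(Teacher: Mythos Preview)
Your proposal is correct and follows essentially the same route as the paper: lower bound via $\l_t(M/r_0)\geq M/r_0$, upper bound by showing $\lp{t}$ is uniformly bounded using the spillback thin flow equations together with $x'_e\leq r_0$ and the lower bound $\epsilon$ on the spillback factors from \Cref{lem_asymptotic_min_eps}.

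Two minor remarks. First, your path-induction argument for $\lpp{i}{t}\leq\max\{1,r_0/(\epsilon\,\num{\min})\}$ is actually more carefully spelled out than the paper's version, which simply asserts $\sup_\theta \lp{t}(\theta)\leq\max_{v,e}\frac{\xp{e}(\theta)}{c_v(\theta)\num{e}}$ directly from the thin-flow definition. Second, your ``main obstacle'' is a non-issue: the proof of \Cref{lem_asymptotic_min_eps} produces $\epsilon=(\nu_{\min}/\num{\Sigma})^{|E|}\cdot\nu_{\min}/\num{\max}$, which depends only on the network parameters, and more fundamentally the Nash flow dynamics (and hence all $c_v(\theta)$) are driven by the constant inflow $r_0$ and do not depend on $M$ at all---the target amount only enters when one evaluates $\l_t$ at $M/r_0$. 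So no restriction argument is needed.
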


We can now use Lemma \ref{lem_asymptotic_optimal} and Theorem \ref{thm_asymptotic_eq} and the fact that $\mtpoa{M} = \frac{\eq(M)}{\opt(M)}$ to bound the PoA for a fixed network. In order to do so we consider the PoA as a function of the target flow amount $M$. 

\begin{theorem}\label{thm_asymptotic_constant}
For a temporal game $\Gamma$ on a fixed network, i.e. when treating everything except the amount of flow $M$ as a constant, the price of anarchy is bounded by a constant, $\mtpoa{M} \in \Theta(1)$.
\end{theorem}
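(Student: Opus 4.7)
The plan is to combine the two preceding asymptotic statements directly. From \Cref{lem_asymptotic_optimal}, there are constants $c_1, c_2 > 0$ (depending only on the fixed network) and some $M_0$ such that $c_1 M \leq \opt(M) \leq c_2 M$ for all $M \geq M_0$. From \Cref{thm_asymptotic_eq}, for each Nash flow over time $f \in \allf{\Gamma}$ we have $\comp{f}(M) \in \Theta(M)$. Combining these with the definition $\mtpoa{M} = \frac{\eq(M)}{\opt(M)}$ should yield a constant upper bound.

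The key subtlety is that $\eq(M) = \sup_{f \in \allf{\Gamma}} \comp{f}(M)$ is a supremum over all Nash flows, so we cannot just use the $\Theta(M)$ bound for a single $f$: we need a uniform linear upper bound in $M$ that holds across all dynamic equilibria of $\Gamma$. Here is where \Cref{lem_asymptotic_min_eps} becomes essential, since it produces one uniform $\epsilon > 0$ bounding all spillback factors of all Nash flows of $\Gamma$ from below. Tracing the argument of \Cref{thm_asymptotic_eq}, the slope of the piece-wise linear function $M \mapsto \comp{f}(M)$ is controlled by quantities of the form $\nicefrac{1}{(c_v \cdot \nu_e^-)}$ and similar expressions depending on the spillback factors, edge capacities, and graph structure. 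Since the edge data are fixed and the spillback factors are uniformly bounded below by $\epsilon$, one obtains a constant $D > 0$ (independent of $f$) such that $\comp{f}(M) \leq D \cdot M$ for all $f \in \allf{\Gamma}$ and all $M \geq M_0$. Consequently $\eq(M) \leq D \cdot M$.

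For the lower bound on $\mtpoa{M}$, observe that by definition $\opt(\Gamma) \leq \comp{f}(\Gamma)$ for every $f \in \allf{\Gamma}$, so $\mtpoa{M} \geq 1$ holds trivially for every $M > 0$. For the upper bound, the combination above yields
\[
\mtpoa{M} \;=\; \frac{\eq(M)}{\opt(M)} \;\leq\; \frac{D \cdot M}{c_1 \cdot M} \;=\; \frac{D}{c_1}
\]
for all sufficiently large $M$, and since $\mtpoa{M}$ is bounded on any compact range of $M > 0$ (both numerator and denominator are positive and finite), we conclude $\mtpoa{M} \in \Theta(1)$.

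The main obstacle is verifying the uniformity claim in the previous paragraph: one must argue that the linear asymptotic bound of \Cref{thm_asymptotic_eq} can be chosen with a constant that does not depend on the particular choice of $f \in \allf{\Gamma}$. This is exactly what \Cref{lem_asymptotic_min_eps} is designed for, and the remaining parameters entering the slope (edge capacities, transit times, storage capacities, graph topology) are fixed by hypothesis. Once uniformity is established, the proof reduces to the elementary estimate above.
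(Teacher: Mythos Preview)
Your proposal is correct and follows the same approach as the paper: combine \Cref{lem_asymptotic_optimal} and \Cref{thm_asymptotic_eq} via $\mtpoa{M} = \eq(M)/\opt(M)$. You are more explicit than the paper about the uniformity of the upper bound in \Cref{thm_asymptotic_eq} across all $f \in \allf{\Gamma}$; this is a genuine point, and indeed the paper's proof of \Cref{thm_asymptotic_eq} already yields the uniform bound $\comp{f}(M) \leq \frac{M}{\epsilon \cdot \num{\text{min}}} + \l_t(0)$ with $\epsilon$ coming from \Cref{lem_asymptotic_min_eps} and $\l_t(0)$ the free-flow shortest-path length, both independent of $f$.
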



\subsection{Bound on the Price of Anarchy for Saturated Graphs}\label{subsec:saturated}
In this section we focus on networks with an additional constraint on the edge capacities. Given a game $\Gamma$ we know that the quickest flow of $\Gamma$ is also a temporally repeated flow, i.e., it has an underlying static flow $\fh$. We say that $\fh$ \emph{saturates every edge} of the given graph if for each edge the outflow capacity is exhausted by $\fh$, i.e., for each $e\in E$ we have $\num{e} = \fh_e$ and additionally it holds that $|\fh|=\sum_{sv\in\outset{s}} \fh_{sv} = r_0$. We call the underlying graph of such a game a \emph{saturated graph}.
Even though restricting attention to saturated graphs may seem harsh, note, that every network can be made saturated by lowering the edge capacities. This can be imagined to be done by a system operator in a Stackelberg strategy-like scenario~\cite{stackelberg34} and is applicable in many real-world examples. For one, streets can be narrowed down by a city administration in practice.

For temporal routing games on saturated graphs we will show that the PoA can be bounded by a value that is only dependent on the worst spillback factor of all Nash flows over time. In order to do that we adapt the idea of the proofs given by Bhaskar et al.~\cite{bhaskar15} for the Koch-Skutella model to the spillback model. 
Note, however, that the proofs given in~\cite{bhaskar15} implicitly assume only finitely many phases, which has not been proven for any of the two models.
Our generalization also holds for the case of an infinite number of phases in both models.\footnote{Note, that in~\cite{correa19} an even more general result is shown for the Koch-Skutella model.}

In principle the proof works as follows. For a given game $\Gamma$ the relation of the completion time of any Nash flow over time of $\Gamma$ to the optimal completion time can be determined by examining the capacity of the current shortest path network and the derivatives of the waiting times for a single phase of the Nash flow. One can then bound the derivatives of the waiting times and use the fact that the PoA is the maximum over the relation of the optimal completion time to the completion times of all Nash flows. This achieves the desired bound.

\paragraph{Bound on the derivatives of the waiting times.}
We start by proving a relation between the derivative of the label-function at the sink and the inflow into the sink. Our proof of this result uses a different idea than the one given in~\cite{bhaskar15} and is considerably shorter.
\begin{lemma}\label{III.L2}(cf.\cite[Lemma~15]{bhaskar15})
Let $\Gamma$ be a temporal routing game and let $f\in\allf{\Gamma}$ with corresponding labels $\l$. Then for any $\theta\leq \frac{M}{r_0}$ we have
\begin{equation*}
\lp{t}(\theta) = \frac{r_0}{\fp{t}(\l_t(\theta))}.
\end{equation*}
\end{lemma}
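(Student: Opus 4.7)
My plan is to exploit the characterization of Nash flows over time in terms of the static flow $x_e(\theta) = F^+_e(\ell_u(\theta)) = F^-_e(\ell_v(\theta))$ recalled in the excerpt just before \eqref{eq:thin_flows}, together with flow conservation at the sink.

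First, I would apply this identity to every incoming edge $e = vt \in \delta^-(t)$ to get $x_{vt}(\theta) = F^-_{vt}(\ell_t(\theta))$. Summing over all $vt \in \delta^-(t)$ and using the fact that $(x_e(\theta))_{e \in E}$ is a static $s$-$t$-flow of value $r_0 \cdot \theta$ (so in particular the total amount entering $t$ equals $r_0 \cdot \theta$), I obtain the identity
\begin{equation*}
\sum_{vt \in \delta^-(t)} F^-_{vt}(\ell_t(\theta)) \;=\; r_0 \cdot \theta
\end{equation*}
valid for all $\theta \leq M/r_0$.

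The second step is to differentiate this identity with respect to $\theta$. Since the cumulative outflows $F^-_{vt}$ are absolutely continuous with derivatives $f^-_{vt}$, and $\ell_t$ is absolutely continuous (being defined via minima of absolutely continuous earliest-arrival functions), the chain rule applies almost everywhere and yields
\begin{equation*}
\sum_{vt \in \delta^-(t)} f^-_{vt}(\ell_t(\theta)) \cdot \ell'_t(\theta) \;=\; r_0.
\end{equation*}
Recognising the left sum as $f^+_t(\ell_t(\theta)) \cdot \ell'_t(\theta)$ by definition of $f^+_t$ and rearranging gives the claim.

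The only subtle point will be justifying the differentiation step pointwise (rather than merely almost everywhere) on the domain considered; I would address this by working phase-wise, using that within each phase $i \in \mathcal{I}_f$ all derivatives are constant, so that the identity above holds on each open phase interval, and $\ell'_t$ is defined in exactly these sense. Since the lemma is stated for $\theta \leq M/r_0$ and used later only through the per-phase derivatives $\ell'_{i,t}$, this phase-wise reading is exactly what is needed. No genuine technical obstacle is expected; the short argument is made possible by the static-flow reformulation of Nash flows and conservation at $t$, which together reduce the claim to a one-line chain-rule computation.
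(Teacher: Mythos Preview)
Your argument is correct and follows essentially the same route as the paper: both proofs use that $(x'_e(\theta))_{e\in E}$ is a static $s$-$t$-flow of value $r_0$ together with the identity $x'_{vt}(\theta)=f^-_{vt}(\ell_t(\theta))\cdot\ell'_t(\theta)$ at the sink, then sum over $\delta^-(t)$ and divide. The only cosmetic difference is that the paper invokes this derivative identity directly from \eqref{eq:thin_flows}, whereas you first write down its integrated form $\sum_{vt}F^-_{vt}(\ell_t(\theta))=r_0\theta$ and then differentiate; your phase-wise justification of this step is fine but unnecessary once \eqref{eq:thin_flows} is cited.
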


\begin{proof}
Let $\theta\leq\frac{M}{r_0}$ be arbitrary. Using that $\xp{}(\theta)$ is a static $s$-$t$ flow of value $r_0$ and $\xp{vt}(\theta) = \fm{vt}(\l_t(\theta)) \cdot \lp{t}(\theta)$ from Equation \eqref{eq:thin_flows}  we obtain
\begin{equation*}
r_{0} = \sum_{vt\in\inset{t}} \xp{vt}(\theta) = \sum_{vt\in\inset{t}} \fm{vt} (\l_t(\theta)) \cdot \lp{t}(\theta) = \lp{t} (\theta) \cdot \fp{t}(\l_t(\theta)).
\end{equation*}
Since $\fp{t}(\l_t(\theta))>0$ for all $\theta$, rearranging terms give the desired result.
\end{proof}

We now proceed with a path-wise bound on the derivatives of the waiting times $\qpp{i}{p}$ for a single phase of the Nash flow over time $i$ using the capacities~$\ca{i}$.

\begin{restatable}{lemma}{IIILfour}\label{III.L4}(cf.\cite[Lemma~18]{bhaskar15})
Let $\Gamma$ be a temporal routing game where the static flow underlying the quickest flow saturates every edge and let $f\in\allf{\Gamma}$. For any $s$-$t$ path $p$, the travel time is bounded by
\begin{equation*}
\tau_p \geq \ome{r} - \sum_{i\in\phases} ( 1 + \qpp{i}{p}) \cdot \frac{\ca{i}}{r_0} \cdot (\ome{i} - \ome{i-1}).
\end{equation*}
\end{restatable}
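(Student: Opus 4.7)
The natural starting point is the recursive definition of the earliest arrival labels, which gives the one-step inequality $\l_v(\theta) \leq \l_u(\theta) + \tau_e + q_e(\l_u(\theta))$ for every edge $e=uv \in E$ (with equality exactly when $e$ is active for $\theta$). First I would chain this inequality along the path $p = (e_1,\dots,e_k)$ with $e_j = u_j v_j$, $u_1 = s$, $v_k = t$, and observe that the left-hand side telescopes, yielding
\[
\l_t(\theta) - \theta \;=\; \sum_{j=1}^{k} \bigl(\l_{v_j}(\theta) - \l_{u_j}(\theta)\bigr) \;\leq\; \tau_p + \sum_{e=uv \in p} q_e(\l_u(\theta)).
\]
Applying this at the final particle $\theta = \theta_r = M/r_0$ and rearranging gives
\[
\tau_p \;\geq\; \ome{r} - \theta_r - \sum_{e = uv \in p} q_e(\l_u(\theta_r)).
\]

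Next I would rewrite the right-hand side phase by phase. Using the fact that within a phase the derivative $\qpp{i}{e} = (q_e(\l_u(\theta)))'$ stays constant (as stated in the excerpt), together with the initial condition $q_e(\l_u(0)) = 0$ (the first particle encounters no queue anywhere since no earlier flow exists), I would integrate to obtain
\[
q_e(\l_u(\theta_r)) \;=\; \sum_{i \in \phases} \qpp{i}{e} \cdot (\theta_i - \theta_{i-1}),
\]
and summing over $e \in p$ together with $\theta_r = \sum_i (\theta_i - \theta_{i-1})$ yields
\[
\theta_r + \sum_{e \in p} q_e(\l_u(\theta_r)) \;=\; \sum_{i \in \phases} (1 + \qpp{i}{p})\,(\theta_i - \theta_{i-1}).
\]

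To convert $\theta$-increments into $\l_t$-increments I would invoke \Cref{III.L2}, which on phase $i$ reduces to $\lp{t}(\theta) = r_0/\ca{i}$. Integrating over the phase gives $\ome{i} - \ome{i-1} = (r_0/\ca{i})(\theta_i - \theta_{i-1})$, hence $\theta_i - \theta_{i-1} = (\ca{i}/r_0)(\ome{i}-\ome{i-1})$. Substituting into the identity above and combining with the earlier inequality produces exactly the claimed bound on $\tau_p$.

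The only delicate point is handling the possibility of infinitely many phases (which the paper explicitly allows via ordinal indexing up to $\omega^\omega$); here I would note that $q_e(\l_u(\cdot))$ is absolutely continuous with bounded derivative on $[0,\theta_r]$, so the phase-wise increments form an absolutely convergent (transfinite) sum whose value is the overall increment of $q_e(\l_u(\cdot))$, which justifies the summation step. The saturation hypothesis on $\fh$ is not actually used in this lemma itself but fixes the setting in which $\ca{i}$ and the path-wise bound will be compared to $r_0$ in the subsequent step of the overall argument.
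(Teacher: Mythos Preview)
Your argument is correct and follows essentially the same route as the paper: telescope the one-step label inequality along $p$ at $\theta_r$, expand the waiting-time sum phase by phase, and convert $\theta$-increments into $\l_t$-increments via \Cref{III.L2} (which the paper packages as the auxiliary \Cref{III.C3}). One small organizational difference: the paper expands each $\tau_e$ separately and therefore makes a case distinction between edges active and inactive for $\theta_r$, invoking an external lemma for the inactive case, whereas your direct use of the definitional inequality $\l_v(\theta)\le \l_u(\theta)+\tau_e+q_e(\l_u(\theta))$ for every edge bypasses this distinction.
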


In the proof we first establish a dependence of the length of a phase as it is experienced at the source and at the sink, respectively. Then we express $\tau_{p}$ in terms of the label functions $\l$ and the waiting times $q$ and their derivatives. The result then follows from applying Lemma \ref{III.L2}.

\paragraph{Relation of the completion times of Nash flow and quickest flow.}
The following lemma enables us to give a first relation of the completion times of the optimal quickest flow and a Nash flow over time.

\begin{restatable}{lemma}{IIILfive}\label{III.L5}(cf.~\cite[Lemma~19]{bhaskar15})
Let $\Gamma$ be a temporal routing game where the static flow $\fh$ underlying the quickest flow saturates every edge and let $f\in\allf{\Gamma}$. Then, the completion time $\opt$ of the optimal flow and the completion time $\comp{f}$ of the Nash flow $f$ are related as
\begin{equation*}
r_0 \cdot \opt = \sum_{p\in \Paths_{s,t}} \fh_p \tau_p + \sum_{i\in\phases} \ca{i} \cdot (\ome{i} - \ome{i-1}),
\end{equation*}
where $\Paths_{s,t}$ is the set of all simple $s$-$t$ paths in $G$ and $\ome{r} = \comp{f}$.
\end{restatable}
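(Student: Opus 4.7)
The plan is to express both sides of the identity through the common quantity $M$, the total flow volume. For the optimal side, I would invoke the description of the quickest flow as the temporal repetition of $\fh$: a rate of $\fh_p$ enters the network along each $s$-$t$ path $p$ for the interval $[0, \opt - \tau_p]$. Summing the resulting volumes $\fh_p(\opt - \tau_p)$ and using the saturation assumption $\sum_p \fh_p = \abs{\fh} = r_0$ yields $M = r_0 \cdot \opt - \sum_{p \in \Paths_{s,t}} \fh_p \tau_p$, which rearranges to $r_0 \cdot \opt = \sum_{p \in \Paths_{s,t}} \fh_p \tau_p + M$.

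For the Nash side, since $\comp{f} = \ome{r}$ and the flow $f$ routes the full volume $M$ to $t$ by that time,
\begin{equation*}
M = \int_0^{\ome{r}} \fp{t}(\xi) \diff \xi.
\end{equation*}
I would then split this integral along the phase structure. By Lemma~\ref{III.L2}, $\lp{t}(\theta) = r_0 / \fp{t}(\l_t(\theta))$ is constant on each phase $(\theta_{i-1}, \theta_i)$, so $\l_t$ is affine on the phase and maps it bijectively onto $(\ome{i-1}, \ome{i})$. On this image $\fp{t}$ takes the constant value $\ca{i}$ by definition of the phase capacity, so the contribution of phase $i$ to the integral is $\ca{i} \cdot (\ome{i} - \ome{i-1})$. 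Summing over all phases gives $M = \sum_{i \in \phases} \ca{i} \cdot (\ome{i} - \ome{i-1})$, and substituting this into the expression for $r_0 \cdot \opt$ closes the argument.

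The main subtlety is handling a possibly infinite number of phases, since the authors leave open whether the breakpoints $\theta_i$ accumulate. The set of phases is at most countable as the open intervals $(\theta_{i-1}, \theta_i)$ are pairwise disjoint in $\R$, and their union covers $(0, \theta_r)$ up to countably many boundary points; the corresponding images form a Lebesgue-null subset of $(0, \ome{r})$ and hence do not affect the integral. Together with the non-negativity of all summands, this makes the interchange of integration and summation valid even under the transfinite indexing hinted at in the footnote, so the identity holds unconditionally on the structure of $\phases$.
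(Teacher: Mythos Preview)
Your proposal is correct and follows essentially the same approach as the paper's proof: both compute $M$ twice---once via the temporally repeated optimal flow to obtain $M = r_0\cdot\opt - \sum_{p}\fh_p\tau_p$, and once by summing the Nash inflow at $t$ phase by phase to obtain $M = \sum_{i}\ca{i}(\ome{i}-\ome{i-1})$---and then equate the two. Your explicit invocation of Lemma~\ref{III.L2} to justify the piecewise-constant structure of $\fp{t}\circ\l_t$ and your remark on countably many phases add rigor but do not change the route.
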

The proof idea is to compare the arrival rates of both flows at the sink  $t$ where we use a flow decomposition along paths for the optimal flow and a decomposition by phases for the Nash flow over time.

By combining the previous two lemmas we can now derive a lower bound on the inverse of the PoA that we will afterwards use to achieve an upper bound on the actual PoA. But in order to proof that we first need the following.

\begin{restatable}{lemma}{IIILseven}\label{III.L7}
Let $\lambda_i \coloneqq \frac{\ca{i}}{r_0} \cdot \sum_{p\in\Paths_{s,t}} \fh_p \qpp{i}{p}$ for each phase $i\in\phases$. Then,
\begin{equation*}
\sum_{i\in\phases} \lambda_i \cdot (\ome{i}-\ome{i-1}) \leq (\ome{r}-\ome{0}) \cdot \sup_{i\in\phases} \lambda_i.
\end{equation*}
\end{restatable}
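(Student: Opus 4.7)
The plan is a short telescoping argument.  First I would verify that each summand on the left-hand side is non-negative, so that replacing $\lambda_i$ by the global supremum $\sup_{j\in\phases}\lambda_j$ only increases it.  The waiting-time derivatives satisfy $\qpp{i}{e} = \lpp{i}{v}-\lpp{i}{u}\geq 0$ on active resetting edges and $\qpp{i}{e} = 0$ on all other active edges, so $\qpp{i}{p}\geq 0$ for every $s$--$t$ path $p$; together with $\ca{i}\geq 0$, $\fh_p\geq 0$, and $r_0>0$ this gives $\lambda_i \geq 0$.  Moreover, by the FIFO property $\l_t$ is non-decreasing, so $\ome{i}-\ome{i-1}\geq 0$.

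Once non-negativity is in place, the inequality follows in one line: for every phase $i$ one has $\lambda_i(\ome{i}-\ome{i-1}) \leq (\sup_{j\in\phases}\lambda_j)(\ome{i}-\ome{i-1})$.  Summing over all phases and pulling the supremum out of the sum yields
\begin{equation*}
\sum_{i\in\phases}\lambda_i(\ome{i}-\ome{i-1}) \;\leq\; \Bigl(\sup_{j\in\phases}\lambda_j\Bigr)\sum_{i\in\phases}(\ome{i}-\ome{i-1}),
\end{equation*}
and the remaining sum telescopes to $\ome{r}-\ome{0}$, giving the claim.

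The main (and only) obstacle is that, as pointed out in the footnote on the definition of phases, the index set $\phases$ need not be countable and may carry an ordinal order type as large as $\omega^\omega$, so a naive induction-on-$i$ telescoping does not literally apply.  My plan is to sidestep this by viewing the sum as a Lebesgue--Stieltjes integral: define the step function $\lambda(\theta) \coloneqq \lambda_{i(\theta)}$, which is constant on each phase, and integrate against the non-decreasing function $\l_t$ on $[0,\tfrac{M}{r_0}]$.  The pointwise bound $\lambda\leq \sup_{j\in\phases}\lambda_j$ then integrates to the desired inequality, and the total variation of $\l_t$ on $[0,\tfrac{M}{r_0}]$ equals $\ome{r}-\ome{0}$ by monotonicity; this recovers the telescoping identity without having to refer to individual predecessor indices.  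Equivalently, one can argue by transfinite induction on the ordinal index of the event points, using that partial sums of non-negative terms bounded above by $\ome{r}-\ome{0}$ always converge.
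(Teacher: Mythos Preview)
Your core argument---bound each term by $(\sup_j\lambda_j)(\ome{i}-\ome{i-1})$ and telescope---is exactly what the paper does, and your Lebesgue--Stieltjes reformulation is a clean way to make the telescoping rigorous over a possibly transfinite index set (the paper simply appeals to the ``telescoping principle for the infinite sum'').

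There is, however, one factual error in your first paragraph.  You assert that $\qpp{i}{e}=\lpp{i}{v}-\lpp{i}{u}\geq 0$ on active \emph{resetting} edges.  This is false: a resetting edge is precisely one whose queue is positive, and such a queue can deplete, so $\qpp{i}{e}$ may be strictly negative there.  (Concretely, take $s\!\to\! a$ with $\nu=1$, $a\!\to\! b$ with $\nu=\tfrac12$, a long bypass $s\!\to\! b$, and $b\!\to\! t$ large; once the bypass activates one gets $\l'_a=\tfrac12<1=\l'_s$, so $\qpp{}{sa}=-\tfrac12$.)  In the thin-flow equations only the \emph{non}-resetting case forces $\l'_v\geq\l'_u$.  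Hence your edge-wise route to $\qpp{i}{p}\geq 0$, and thus to $\lambda_i\geq 0$, does not work as stated.

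Fortunately this does not matter for the inequality: the step $\lambda_i(\ome{i}-\ome{i-1})\leq(\sup_j\lambda_j)(\ome{i}-\ome{i-1})$ only requires $\ome{i}-\ome{i-1}\geq 0$, which you correctly get from monotonicity of $\l_t$.  The sign of $\lambda_i$ is irrelevant.  The paper, rather than arguing non-negativity, spends its first paragraph proving that the $\lambda_i$ are \emph{bounded above}---using that $\ca{i}\leq\sum_{e\in\inset{t}}\num{e}$, that $\fh_p\leq r_0$, and that the spillback factors are bounded away from $0$ (Lemma~\ref{lem_asymptotic_min_eps}) to control $\qpp{i}{p}$.  Boundedness is what guarantees $\sup_i\lambda_i<\infty$, so that the bound is non-vacuous and the subsequent Lemma~\ref{III.C6} can actually extract a phase witnessing the supremum up to $\epsilon$.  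You should replace your non-negativity paragraph with this boundedness argument (or at least note that if the supremum were infinite the inequality would be trivial, and defer finiteness to where it is needed).
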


In the proof we first establish that the set $\{\lambda_i : i\in\phases\}$ is bounded and then use this and the telescoping principle to bound the left hand side.

The next lemma establishes the aforementioned bound on the inverse of the PoA. It is in this proof that the number of $\alpha$-extension phases comes into play. If we assume that the supremum in the statement of Lemma \ref{III.L7} is attained by some phase $i\in\phases$, which is in particular true if there are only finitely many phases, then we can prove Lemma \ref{III.C6} without the $\epsilon$ error and the proofs go through similar to~\cite{bhaskar15}. But since it is still an open problem whether the number of those phases is always finite (in the Koch-Skutella model as well as the spillback model), we prove it here for the case of infinitely many $\alpha$-extension phases.

\begin{restatable}{lemma}{IIICsix}\label{III.C6}
Let $\Gamma$ be a temporal routing game where the static flow $\fh$ underlying the quickest flow saturates every edge and let $f\in\allf{\Gamma}$. Then for every $\epsilon > 0$ there exists a phase $i$ of $f$ such that
\begin{equation*}
\frac{\opt}{\comp{f}} + \epsilon \geq 1 - \frac{\ca{i}}{{r_0}^2} \sum_{e\in E} \num{e} \qpp{i}{e}.
\end{equation*}
\end{restatable}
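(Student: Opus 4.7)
\textbf{Proof plan for Lemma \ref{III.C6}.}
The plan is to chain Lemmas \ref{III.L4}, \ref{III.L5}, and \ref{III.L7} in a short algebraic manipulation and then convert the path-wise sum into an edge-wise sum using the saturation hypothesis. First, I would take the bound on $\tau_p$ from Lemma \ref{III.L4}, multiply by $\fh_p$, and sum over all $s$-$t$ paths $p \in \Paths_{s,t}$. Since $\sum_{p} \fh_p = |\fh| = r_0$ on a saturated graph, this yields
\begin{equation*}
\sum_{p \in \Paths_{s,t}} \fh_p \tau_p \;\geq\; r_0 \cdot \ome{r} \;-\; \sum_{i \in \phases} \ca{i}\,(\ome{i}-\ome{i-1}) \;-\; \sum_{i \in \phases} \lambda_i \cdot (\ome{i}-\ome{i-1}),
\end{equation*}
with $\lambda_i$ as defined in Lemma \ref{III.L7}. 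Substituting this into the identity of Lemma \ref{III.L5} cancels the term $\sum_i \ca{i}(\ome{i}-\ome{i-1})$ and leaves
\begin{equation*}
r_0 \cdot \opt \;\geq\; r_0 \cdot \ome{r} \;-\; \sum_{i \in \phases} \lambda_i \cdot (\ome{i}-\ome{i-1}).
\end{equation*}

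Next I would apply Lemma \ref{III.L7}, which bounds the remaining sum by $(\ome{r}-\ome{0}) \cdot \sup_{i} \lambda_i \leq \ome{r} \cdot \sup_{i} \lambda_i$ (since $\ome{0} \geq 0$). Dividing by $r_0 \cdot \ome{r} = r_0 \cdot \comp{f}$ gives
\begin{equation*}
\frac{\opt}{\comp{f}} \;\geq\; 1 \;-\; \frac{\sup_{i \in \phases} \lambda_i}{r_0}.
\end{equation*}
Here the main technical subtlety is that the supremum need not be attained when $|\phases|$ is infinite, which is precisely the case we need to cover. To handle this, I would introduce an $\epsilon$-slack: for any $\epsilon > 0$, by the definition of supremum, there exists a phase $i \in \phases$ with $\lambda_i \geq \sup_j \lambda_j - \epsilon r_0$. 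Plugging this $i$ into the previous inequality yields
\begin{equation*}
\frac{\opt}{\comp{f}} + \epsilon \;\geq\; 1 - \frac{\lambda_i}{r_0} \;=\; 1 - \frac{\ca{i}}{r_0^{\,2}} \sum_{p \in \Paths_{s,t}} \fh_p \, \qpp{i}{p}.
\end{equation*}

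Finally, I would convert the path sum into an edge sum via the path decomposition of $\fh$: using $\qpp{i}{p} = \sum_{e \in p} \qpp{i}{e}$ and swapping the order of summation gives $\sum_{p} \fh_p \qpp{i}{p} = \sum_{e \in E} \qpp{i}{e} \sum_{p \ni e} \fh_p = \sum_{e \in E} \fh_e \qpp{i}{e}$, and the saturation hypothesis $\fh_e = \num{e}$ for every $e \in E$ turns this into $\sum_{e \in E} \num{e} \qpp{i}{e}$, which is the desired form. The main obstacle is really only the infinite-phase case mentioned above; once the $\epsilon$-argument is set up, the rest is bookkeeping and a standard flow decomposition, and the saturation assumption is used exactly once, in the last step, to replace $\fh_e$ by $\num{e}$.
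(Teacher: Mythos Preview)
Your proposal is correct and follows essentially the same route as the paper's own proof: summing the bound of Lemma~\ref{III.L4} against $\fh_p$, substituting into Lemma~\ref{III.L5} to cancel the $\sum_i \ca{i}(\ome{i}-\ome{i-1})$ term, invoking Lemma~\ref{III.L7}, dividing by $r_0\cdot\comp{f}$, introducing the $\epsilon$-slack to handle a possibly non-attained supremum, and finally rewriting the path sum as an edge sum via $\fh_e=\num{e}$. One small remark: the saturation hypothesis is in fact used twice (you already note $\sum_p \fh_p = r_0$ at the start), not only in the final path-to-edge conversion, so your closing sentence slightly undercounts its role.
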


The proof idea is to sum $\fh_p \tau_p$ over all paths $p\in\Paths_{s,t}$ and using Lemma  \ref{III.L4} to bound this from below. Afterwards, we use \Cref{III.L5,III.L7} to obtain a lower bound on $\frac{\opt}{\comp{f}}$ in terms of a supremum of the capacities and derivatives of the queuing delay over all phases. Since we do not know whether this supremum is attained we have to inject the $\epsilon$ error and after rearranging terms we obtain the desired result.

\paragraph{Upper bound for saturated graphs.}
We can now turn the lower bound in Lemma \ref{III.C6} into an upper bound on the price of anarchy by proving a bound on the sum of the right-hand side of the expression given in Lemma \ref{III.C6}. Here for the first time the spillback factors of the Nash flow over time play an important role. 

\begin{restatable}{lemma}{IIILeight}\label{III.L8}
Let $\Gamma$ be a temporal routing game and $f\in\allf{\Gamma}$. In any phase $i$ of $f$ where $\frac{r_0}{\ca{i}}\geq 1$ we have
\begin{equation*}
\sum_{e\in E} \num{e} \qpp{i}{e} \leq \frac{r_0}{\minc{f}{i}} \ln\left(\frac{r_0}{\ca{i}}\right),
\end{equation*}
where $\minc{f}{i} \coloneqq \min\{c_v(\theta) : v\in V, \theta \in (\theta_{i-1}, \theta_i)\}$ is the minimal $c_v$ of $f$ in phase $i$.
\end{restatable}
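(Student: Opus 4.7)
I would prove the bound via a layer-cake decomposition of the sum combined with the spillback thin flow identity applied to every edge that carries positive flow. Inactive edges satisfy $\qpp{i}{e} = 0$, and active edges with $\qpp{i}{e} \leq 0$ can only help the inequality, so it suffices to bound the contribution of active edges $e = uv \in \eact{i}$ with $\lpp{i}{v} > \lpp{i}{u}$. For every such edge, the spillback thin flow equations force the identity $\lpp{i}{v} = \xp{e}/(c_{i,v}\,\num{e})$: in the resetting case this is simply the definition of $\rho_e$, while in the non-resetting case the strict inequality $\lpp{i}{v} > \lpp{i}{u}$ excludes the $\lpp{i}{u}$-branch of the maximum (and the $\xp{e}=0$ possibility is ruled out, since otherwise $\rho_e = \lpp{i}{u}$ would contradict $\lpp{i}{v} = \min_{e'} \rho_{e'}$). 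Rearranging and using $c_{i,v} \geq \minc{f}{i}$ and $\lpp{i}{v} \geq t$ gives $\num{e} = \xp{e}/(c_{i,v}\,\lpp{i}{v}) \leq \xp{e}/(\minc{f}{i}\cdot t)$ whenever $\lpp{i}{u} < t \leq \lpp{i}{v}$.

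\textbf{Layer cake and cut bound.} Writing $\lpp{i}{v} - \lpp{i}{u} = \int_{\lpp{i}{u}}^{\lpp{i}{v}} dt$ and swapping sum and integral by Tonelli,
\[
\sum_{e\in E} \num{e}\,\qpp{i}{e} \;\leq\; \int_0^\infty K(t)\,dt, \qquad K(t) \coloneqq \!\!\sum_{\substack{e = uv \in \eact{i}\\ \lpp{i}{u} < t \leq \lpp{i}{v}}}\!\! \num{e}.
\]
The edges contributing to $K(t)$ are precisely the forward edges of the level cut $S(t) \coloneqq \{w : \lpp{i}{w} < t\}$, which is an $s$-$t$-cut for every $t \in (1, r_0/\ca{i}]$ since $\lpp{i}{s} = 1$ from the thin flow equations and $\lpp{i}{t} = r_0/\ca{i}$ by Lemma \ref{III.L2}. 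Combining the pointwise estimate $\num{e} \leq \xp{e}/(\minc{f}{i}\cdot t)$ with the fact that the forward flow across $S(t)$ equals $r_0$ yields $K(t) \leq r_0/(\minc{f}{i}\cdot t)$. Integrating on $(1, r_0/\ca{i}]$,
\[
\int_0^\infty K(t)\,dt \;\leq\; \frac{r_0}{\minc{f}{i}} \int_1^{r_0/\ca{i}} \frac{dt}{t} \;=\; \frac{r_0}{\minc{f}{i}}\,\ln\!\left(\frac{r_0}{\ca{i}}\right),
\]
which is the claimed bound; the hypothesis $r_0/\ca{i} \geq 1$ ensures the logarithm is defined and non-negative.

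\textbf{Main obstacle.} The delicate point is the claim that the forward flow of $(\xp{e})_e$ across every level cut $S(t)$ equals exactly $r_0$, equivalently that no active used edge $e = uv$ of the thin flow has $\lpp{i}{v} < \lpp{i}{u}$. Non-resetting active edges automatically satisfy $\lpp{i}{v} \geq \lpp{i}{u}$ by the max-branch of $\rho_e$, but for resetting active edges this monotonicity has to be extracted from structural properties of spillback Nash flows — in particular the non-negativity of queue-growth rates along current shortest paths together with the no-slack condition of~\cite{sering19}. Should this monotonicity fail on a set of edges, the argument can be rescued by pairing each reverse edge with the forward flow it induces, exploiting the fact that reverse edges contribute non-positively to $\sum_e \num{e}\,\qpp{i}{e}$ and absorb their effect into the bound; I expect this case analysis to be the most careful bookkeeping of the proof.
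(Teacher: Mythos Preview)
Your approach is genuinely different from the paper's, and the obstacle you flag is real and is not closed by your sketched rescue.

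The paper does not use level cuts. It first establishes the edge-wise identity $c_{i,v}\,\num{e}\,\qpp{i}{e} = \xp{e}\,(1 - \lpp{i}{u}/\lpp{i}{v})$ for every $e=uv\in E$ (via the same three-case analysis you describe: inactive, active non-resetting, active resetting), then takes a \emph{path} decomposition $\{\xp{p}\}_{p\in\Paths_{s,t}}$ of the static flow $\xp{}$ and bounds
\[
\sum_{e=uv\in E} c_{i,v}\,\num{e}\,\qpp{i}{e}
\;=\; \sum_{p}\xp{p}\sum_{uv\in p}\Bigl(1-\tfrac{\lpp{i}{u}}{\lpp{i}{v}}\Bigr)
\;\leq\; \sum_{p}\xp{p}\,\ln\Bigl(\tfrac{\lpp{i}{t}}{\lpp{i}{s}}\Bigr)
\;=\; r_0\,\ln\Bigl(\tfrac{r_0}{\ca{i}}\Bigr),
\]
where the inequality is~\cite[Claim~12]{correa19}, i.e.\ the elementary bound $1-x\leq\ln(1/x)$ applied edge by edge and then telescoped along~$p$. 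The key feature is that $1-x\leq\ln(1/x)$ holds for \emph{all} $x>0$, so edges with $\lpp{i}{v}<\lpp{i}{u}$ are handled automatically: they contribute a negative summand on both sides, and the telescoping still collapses to $\ln(\lpp{i}{t}/\lpp{i}{s})$. Only at the end does the paper replace each $c_{i,v}$ on the left by $\minc{f}{i}$.

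Your layer-cake bound, by contrast, really needs the forward $\xp{}$-flow across each level cut $S(t)$ to be at most~$r_0$, and the pairing you propose does not deliver this when $\minc{f}{i}<1$. Concretely, a backward edge $e=uv$ with $\lpp{i}{v}<t\leq\lpp{i}{u}$ and $\xp{e}>0$ forces an excess of $\xp{e}$ in the forward flow across $S(t)$, inflating your upper estimate on $K(t)$ by $\xp{e}/(\minc{f}{i}\,t)$; the compensating negative contribution you dropped is $\num{e}=\xp{e}/(c_{i,v}\,\lpp{i}{v})$ per unit~$t$, and the only generally valid lower bound on it is $\xp{e}/(c_{i,v}\,t)$, which for $c_{i,v}>\minc{f}{i}$ is strictly smaller than $\xp{e}/(\minc{f}{i}\,t)$. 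So the offset fails precisely in the spillback regime where the lemma has content. The cleanest repair is to abandon the cut picture and switch to the paper's path decomposition, where the logarithmic telescoping absorbs both signs in one stroke and no monotonicity of~$\lp{}$ along flow-carrying edges is needed for the main inequality.
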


The proof utilizes~\cite[Claim 12]{correa19} and follows the line of argumentation in~\cite{bhaskar15} but incorporates the added complexity of the spillback model.
We obtain that $c_v \num{e} \qp{e} = \xp{e} \cdot (1- \frac{\lp{u}}{\lp{v}})$ for every edge $e=uv$ and then sum this expression over all edges in the graph. Rearranging and plugging in the above expression then yields the desired result.

We can now obtain the desired upper bound on the price of anarchy.

\begin{theorem}\label{III.T9}
Let $\Gamma$ be a temporal routing game where the static flow $\fh$ underlying the quickest flow saturates every edge of the graph.
If the minimal spillback factor satisfies $\minc{}{} \coloneqq \min_{f\in\allf{\Gamma}}\min\{c_v(\theta) : v\in V, \theta \in \R_{\geq 0}\} > \frac{1}{e}$, then the price of anarchy is bounded by $\frac{\eq}{\opt} \leq \frac{\minc{}{} e}{\minc{}{} e - 1}.$
\end{theorem}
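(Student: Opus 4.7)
The proof plan is to combine Lemma~\ref{III.C6} with Lemma~\ref{III.L8}. Fix an arbitrary $f \in \allf{\Gamma}$ and $\epsilon > 0$. Lemma~\ref{III.C6} provides a phase $i$ with
\[
\frac{\opt}{\comp{f}} + \epsilon \;\geq\; 1 - \frac{\ca{i}}{r_0^{2}} \sum_{e \in E} \num{e}\, \qpp{i}{e}.
\]
Since $\lp{t}(\theta) \geq 1$ follows inductively from the thin-flow equations starting at $\lp{s} = 1$ (the non-resetting case satisfies $\rho_e \geq \lp{u}$ directly, and in the resetting case a persistent queue forces the effective outflow $c_v \num{e}$ to equal the rate at which the queue is served), Lemma~\ref{III.L2} gives $\ca{i} = r_0/\lpp{i}{t} \leq r_0$. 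Thus the hypothesis $r_0/\ca{i} \geq 1$ of Lemma~\ref{III.L8} is satisfied, and substituting its bound yields
\[
\frac{\opt}{\comp{f}} + \epsilon \;\geq\; 1 - \frac{1}{\minc{f}{i}} \cdot \frac{\ln(r_0/\ca{i})}{r_0/\ca{i}}.
\]

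Next, I would optimize the quotient $\ln(y)/y$ over $y = r_0/\ca{i} \geq 1$ to produce a phase-independent lower bound. Elementary calculus shows that $y \mapsto \ln(y)/y$ attains its unique maximum on $[1,\infty)$ at $y = e$ with value $1/e$. Hence, regardless of which phase is delivered by Lemma~\ref{III.C6},
\[
\frac{\opt}{\comp{f}} + \epsilon \;\geq\; 1 - \frac{1}{e \cdot \minc{f}{i}} \;\geq\; 1 - \frac{1}{e \cdot \minc{}{}} \;=\; \frac{e\,\minc{}{} - 1}{e\,\minc{}{}},
\]
where we used $\minc{f}{i} \geq \minc{}{}$ by definition of $\minc{}{}$ as an infimum over all Nash flows and all vertices.

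Since $\epsilon > 0$ was arbitrary, letting $\epsilon \to 0$ gives $\opt/\comp{f} \geq (e\minc{}{} - 1)/(e\minc{}{})$. The hypothesis $\minc{}{} > 1/e$ ensures the right-hand side is strictly positive, so we may invert to get $\comp{f}/\opt \leq e\,\minc{}{}/(e\,\minc{}{} - 1)$; taking the supremum over all $f \in \allf{\Gamma}$ then yields the stated bound on $\eq/\opt$. The main subtle step is handling the $\epsilon$ error introduced in Lemma~\ref{III.C6}, which stems from the unknown attainment of the supremum in Lemma~\ref{III.L7}; this difficulty evaporates here because the bound produced by the $\ln(y)/y$ optimization is uniform in the phase, hence survives the limit $\epsilon \to 0$. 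The only genuinely delicate ingredient is the justification that $r_0/\ca{i} \geq 1$ — without it the invocation of Lemma~\ref{III.L8} would fail, and in a phase where the queues are momentarily draining faster than $r_0$ one would need to argue that the right-hand side of Lemma~\ref{III.C6} is trivially large enough (indeed already $\geq 1$ when all $\qpp{i}{e}$ vanish).
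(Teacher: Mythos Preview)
Your overall architecture matches the paper's: apply Lemma~\ref{III.C6}, verify the hypothesis of Lemma~\ref{III.L8}, substitute, optimize $a \mapsto a \ln(1/a)$ (equivalently $y \mapsto \ln(y)/y$), pass to the limit in $\epsilon$, and invert. The only substantive difference is in how you justify the hypothesis $r_0/\ca{i} \geq 1$, and there your argument has a genuine gap.

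You claim that $\lp{t}(\theta) \geq 1$ ``follows inductively from the thin-flow equations'', arguing that $\rho_e \geq \lp{u}$ in the non-resetting case and giving a vague statement about the resetting case. This inductive monotonicity is \emph{false} in general: as the paper itself notes (see the remark preceding the proof of Lemma~\ref{III.L8} in the appendix), Cominetti et al.\ exhibit networks in which the Nash flow delivers flow to the sink at a rate strictly exceeding $r_0$ during some phases, i.e.\ $\ca{i} > r_0$ and hence $\lpp{i}{t} < 1$. The obstruction is precisely the resetting case: for $e = uv \in E^*_\theta$ one has $\rho_e = x'_e/(c_v \num{e})$, and nothing forces this to dominate $\lp{u}$ when $x'_e$ is small. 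Your parenthetical about ``a persistent queue forcing the effective outflow to equal the rate at which the queue is served'' does not establish a lower bound on $\rho_e$.

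The paper's justification is different and relies essentially on the \emph{saturation} hypothesis, which you never invoke at this step: since $\num{e} = \fh_e$ for every edge and $\abs{\fh} = r_0$, the outflow capacities of the edges entering $t$ sum to exactly $r_0$, and therefore $\ca{i} = \sum_{vt \in \inset{t}} \fm{vt}(\l_t(\theta)) \leq \sum_{vt \in \inset{t}} \num{vt} = r_0$. Once you replace your inductive claim with this one-line capacity argument, the rest of your proof goes through and is essentially identical to the paper's.
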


\begin{proof}
For any $f\in\allf{\Gamma}$ with completion time $\comp{f}$ we know that $\fp{t}(\theta) = \sum_{vt\in\inset{t}} \fm{vt}(\theta) \leq  r_0$ for all $\theta \in \R_{\geq 0}$ since we only consider saturated graphs. Thus, we have $\frac{r_0}{\ca{i}} \geq 1$ in all phases of $f$. From \Cref{III.C6,III.L8} we obtain that for every $\epsilon > 0$ there exists a phase $i$ of $f$ such that
\begin{equation*}
\frac{\opt}{\comp{f}} + \epsilon \geq 1 - \frac{\ca{i}}{{r_0}^2} \sum_{e\in E} \num{e} \qpp{i}{e} \geq 1 - \frac{\ca{i}}{{r_0}^2} \frac{r_0}{\minc{f}{i}} \cdot \ln\left(\frac{r_0}{\ca{i}}\right) = 1 - \frac{a_{i}}{\minc{}{}} \cdot \ln\left(\frac{1}{a_{i}}\right),
\end{equation*}
where $\minc{}{} \coloneqq \min_{f\in\allf{\Gamma}}\min\{c_v(\theta) : v\in V, \theta \in \R_{\geq 0}\} \leq \minc{f}{i}$ and $a_{i}\coloneqq\frac{\ca{i}}{r_0}$.

Simple calculus shows that the term $\frac{a_{i}}{\minc{}{}} \cdot \ln\left(\frac{1}{a_{i}}\right)$ is maximized for $a_{i} = \frac{1}{e}$.
Using the above inequality, derived from some phase $i$, for any $\epsilon > 0$ we obtain
\begin{equation*}
\frac{\opt}{\comp{f}} + \epsilon \geq 1 - \frac{1}{\minc{}{} e} = \frac{\minc{}{} e - 1}{\minc{}{} e}.
\end{equation*}
Since by assumption we have $\minc{}{} > \frac{1}{e}$ we can take the inverse of the inequality to obtain $\frac{\comp{f}}{\opt} \leq \frac{\minc{}{} e}{\minc{}{} e - 1}.$
We finish by noting that $\eq = \sup_{f\in\allf{\Gamma}} \comp{f}$.
\end{proof}

\section{Conclusions} \label{sec:conclusion}
Our work shows that the PoA is highly dependent on spillback effects. Although, even in restricted network classes the completion times of dynamic equilibria can be arbitrarily bad compared to a quickest flow, the PoA can still be bounded in terms of the spillback factors under some constraints on the edge capacities. Transferred to real-world traffic this means the interplay between selfish traffic users is critical in particular in high congested areas.

Even though we give a substantial analysis of the PoA in the flow over time model with spillback, there are still some open problems remaining. Is the bound we establish in \Cref{III.T9} tight? Are there any bounds in the case of $\minc{}{} \leq \frac{1}{e}$ or is it possible to enforce $\minc{}{} > \frac{1}{e}$ through some Stackelberg-like strategy? Do the results of the recent work of Correa et al.~\cite{correa19} also transfer to the spillback setting? On the more applied side of the research it would also be very interesting to algorithmically identify street segments (edges) which are especially vulnerable for spillback. In the long run this could help road administrations to decide which roads should be expanded (increasing the storage capacity) or which roads should be narrowed or closed (due to the Braess effect).


\bibliographystyle{splncs04}
\bibliography{bibfile}

\newpage

\ifarxiv
\appendix
\section{Proofs of Section \ref{sec:lower}}\label{app:lower}
Here we provide the full proofs that were skipped in Section \ref{sec:lower}. 

\unbounded*
\begin{proof}
Consider the network $\Gamma$ given in Figure \ref{fig_unbounded} and let $f$ be a Nash flow over time of it. We see that edge $e_2$ fills up at time $\frac{\epsilon}{3-\epsilon}$ at which the travel time on the straight lower path is 1. But since edge $e_3$ has a free flow transit time of $\tau_3 = 2$, $e_3$ never enters the shortest path network.
Since $\fp{t}(\theta) = \epsilon$ for all times $\theta \in\R_{\geq 0}$, the completion time of Nash flow $f$ depending on the target flow amount $M$ is $\comp{f}(M) = \frac{M}{\epsilon}$.

On the other hand the quickest flow $g$ of the given network routes $\epsilon$ units of flow over $e_2$ and (supposing that $M\geq 2\epsilon$) routes the remaining $3-\epsilon$ over $e_3$ resulting in a completion time of $\opt(M) = 2 + \frac{M-2\epsilon}{3}$.
For any temporal routing game $\Gamma$ on the given network $\Gamma$ with fixed target amount $M\in \R_{\geq 0}$ we thus have
\begin{equation*}
\mtpoa{\Gamma} \geq \frac{\comp{f}(M)}{\opt(M)} = \frac{\frac{M}{\epsilon}}{2 + \frac{M-2\epsilon}{3}} \rightarrow \infty \text{ ~~~for } \epsilon \rightarrow 0.
\end{equation*}
\end{proof}

\capacities*
\begin{proof}
Let $k\in \N$ and consider the network in Figure \ref{fig_unbounded} where we exchange $e_1$ and $e_3$ by bunches of parallel edges $E_1$ and $E_3$, respectively, with $r_0=|E_1|=k$ and $|E_3|=k-1$. Furthermore, we set all edge capacities to $1$ and set $\tau_e = 0$ for $e\in E_1$ and $\tau_e = 1$ for $e\in E_3$. Similar to the proof of \Cref{thm_unbounded}, the Nash flow over time $f$ will not use any of the edges $e\in E_3$ but the quickest flow~$g$ will. Thus, for all times $\theta\in\R_{\geq 0}$ we have $\fp{t}(\theta)=1$ but for the quickest flow $g$ it holds that $\gp{t}(\theta)=1$ only if $\theta < 2$ and otherwise $\gp{t}(\theta) = k$. If we set $M=k+2$ we get $\mtpoa{\Gamma} \geq \frac{\comp{f}}{\opt} = \frac{M}{2 + \frac{M-2}{k}}=\frac{k+2}{3}$, i.e., $\mtpoa{\Gamma}\in\Omega(k)$.
\end{proof}

\begin{figure}[t]
\begin{center}
\includegraphics[scale=0.9]{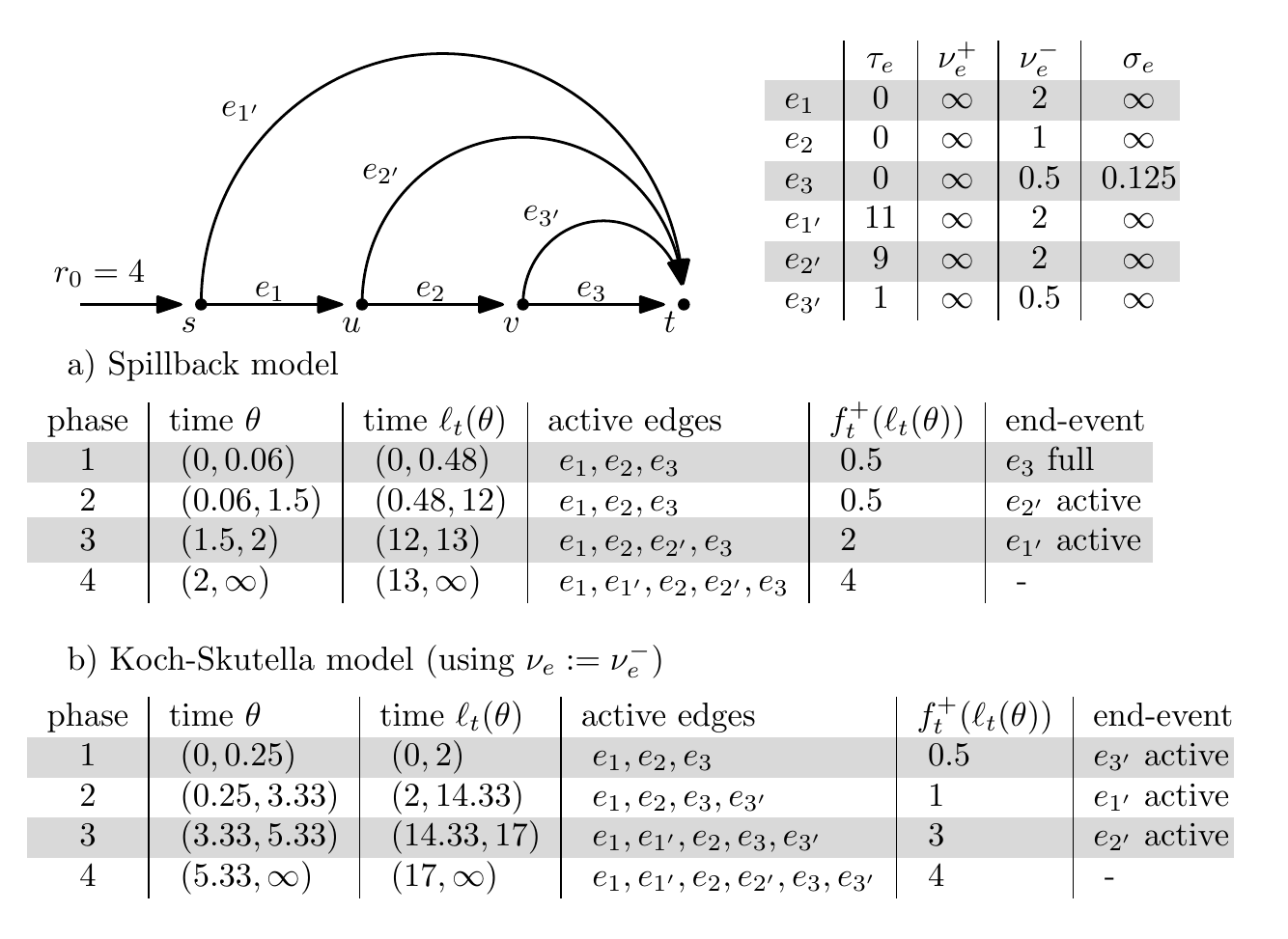}
\caption{This is the same example as in Figure \ref{fig_fast_spillback} that shows that Nash flows over time with spillback can be faster than Nash flows over time without spillback, see Proposition \ref{prop_fast_spillback}. Additionally, here we give the Nash flows of the two models of flows over time.} 
\label{fig_fast_spillback_table}
\end{center}
\end{figure}
\propfastspillback*
\begin{proof}
Consider the network and tables given in Figure \ref{fig_fast_spillback_table}. We show that the statement is true for all target amounts $M$ that are greater than some threshold. In the spillback model edge $e_3$ gets full very quickly which prevents the Nash flow from using $e_{3^\prime}$ (see Table a)). In the final phase of the spillback Nash flow, i.e.\ for all $\theta>2$, we have $T_{s,t}(\theta) = \theta +11$. That means from time $\theta=2$ on it takes the Nash flows 11 time units to route flow from the source to the sink.
On the other hand the Nash flow of the Koch-Skutella model uses both of the 'bad' edges $e_3$ and $e_{3^\prime}$ and thus in its final phase for all $\theta>5.33$ we have $T_{s,t}(\theta)=\theta+11.66$ (see Table b)). That means the Nash flow without spillback needs longer to route the flow from the source to the sink. 
For a temporal routing game on this network it thus holds that for all $M>\frac{40}{3}$ the completion time of the above Nash flow with spillback is less than the completion time of the Nash flow of the model without spillback. 
\end{proof}

\section{Proofs of Section \ref{sec:upper}}\label{app:upper}
Here we provide the full proofs that were skipped in Section \ref{sec:upper}. 

\asymptoticOptimal*
\begin{proof}
Let $g$ be an optimal flow and $p$ be the shortest $s$-$t$ path in the network without any queues. We clearly have for all $M$ that $\opt(M) \geq \frac{M}{r_0}$ as $g$ can not route flow before it is present at the source. On the other hand, it is known that the optimal flow does not build up any queues~\cite{koch11} from which we can deduce that $\opt(M) \leq \tau_{p} + \frac{M}{\nu_{p}},$ for all $M$, where $\nu_p$ is the minimum of all in- and outflow capacities along $p$.
\end{proof}

\lemasymptotic*
\begin{proof}
Let $f\in\allf{\Gamma}$ be a Nash flow over time of the network $\Gamma$ and let $c_v(\theta) < 1$ for some vertex $v\in V$ at time $\theta\in \R_{\geq 0}$. Then there exists at least one throttled incoming edge $e=uv\in\inset{v}$. By definition $c_v(\theta)$ is the maximum value to fulfil the fair allocation condition $\fm{e}(\theta) = \min\{\bm{e}(\theta), \num{e} c_v(\theta)\}$. This yields
\begin{equation}\label{lem_asymptotic_min_eps.1}
c_v(\theta) \geq \frac{\fm{e}(\theta)}{\num{e}}.
\end{equation}
To bound $\fm{e}(\theta)$ we follow the idea of the proof of~\cite[Lemma~3]{sering19}. For that let $\nu_{\text{min}} \coloneqq \min(\{\nup{e},\num{e} : e\in E\}\cup \{1\})$ and $\num{\Sigma} \coloneqq \max\{\sum_{e\in E} \num{e}, 1\}$.
Since $e=uv$ is throttled we know from the no slack condition that there is an edge $e_1=vw\in\outset{v}$ with $\fp{e_1}(\theta) = \bp{e_1}(\theta)$. If $e_1$ is full and throttled, we continue with an edge $e_2\in\outset{w}$ for which it holds that $\fp{e_2}(\theta) = \bp{e_2}(\theta)$. This edge again exists because of the no slack condition. We carry on this procedure until we find an edge $e_k$ that is not throttled or not full. Since we know that the set of full edges is acyclic by the no deadlock condition the procedure terminates, and furthermore, we know that $\fp{e_k}(\theta) = \bp{e_k}(\theta)\geq \min\set{\nup{e_k},\num{e_k}}$ since it is either not full ($\bp{e_k} = \nup{e_k}$) or full but not throttled ($\bp{e_k} = \min\set{\nup{e_k},f_e^-(\theta)}$ and $f_e^-(\theta) = b_e^-(\theta) = \nup{e_k}$). If we consider two consecutive edges $e_{i-1}\in\inset{v^\prime}$ and $e_i\in\outset{v^\prime}$ of the above procedure and use that $e_{i-1}$ is throttled we get 
\begin{equation}\label{lem_asymptotic_min_eps.2}
\fm{e_{i-1}}(\theta) = c_v(\theta) \cdot \num{e_{i-1}} \geq \frac{\sum_{e^\prime \in\outset{v^\prime}} \fp{e^\prime}(\theta)}{\sum_{e^\prime \in\inset{v^\prime}} \cdot \num{e^\prime}} \cdot \nu_{\text{min}} \geq \frac{\fp{e_i}(\theta)}{\num{\Sigma}} \cdot \nu_{\text{min}},
\end{equation}
where in the first inequality we used the definitions of $c_v$ and $\nu_{\text{min}}$ and in the second the definition of $\num{\Sigma}$ and the fact that $e_i\in\outset{v^\prime}$. By construction of the procedure we know that for all $i\leq k$ edge $e_i$ is full with saturated inflow capacity and thus it holds that $\fp{e_i}(\theta) = \bp{e_i}(\theta) = \min\{\nup{e_k},\num{e_k}\}$. Now we can recursively apply the above Equation \eqref{lem_asymptotic_min_eps.2} on all edges $e_i, i\leq k$ which gives us the desired bound on the outflow rate of edge $e$:
\begin{equation*}
\fm{e} \geq \left(\frac{\nu_{\text{min}}}{\num{\Sigma}}\right)^k \cdot \nu_{\text{min}} \geq \left(\frac{\nu_{\text{min}}}{\num{\Sigma}}\right)^{|E|} \cdot \nu_{\text{min}} \coloneqq \epsilon^\prime,
\end{equation*}
where we used that $k\leq |E|$.

Plugging this into Inequality \eqref{lem_asymptotic_min_eps.1} provides 
\begin{equation*}
c_v(\theta) \geq \frac{\fm{e}(\theta)}{\num{e}} \geq \frac{\epsilon^\prime}{\num{e}}.
\end{equation*}
Let $\num{\text{max}}\coloneqq \max_{e\in E} \num{e}$. Then for any vertex $v\in V$ at any time $\theta\in\R_{\geq 0}$ it holds that $c_v(\theta) \geq \frac{\epsilon^\prime}{\num{\text{max}}} =: \epsilon,$ where $\epsilon = \left(\frac{\nu_{\text{min}}}{\num{\Sigma}}\right)^{|E|} \cdot \frac{\nu_{\text{min}}}{\num{\text{max}}}$ only depends on the network $\Gamma$ and not on the specific Nash flow or the target amount.
\end{proof}

\asymptoticEQ*
\begin{proof}
Let $f\in\allf{\Gamma}$ be a Nash flow over time on the network $\Gamma$, then $\comp{f}(M) = \l_t\left(\frac{M}{r_0}\right)$ is the time by which $f$ has routed $M$ units of flow to the sink $t$. 
We know from the construction of the $\alpha$-extension~\cite[Section 5]{sering19} that $\lp{t}(\theta)$ is non-negative and constant per phase of $f$. Therefore, $\comp{f}(M): \R_{\geq 0} \rightarrow \R_{\geq 0}$ with $M\mapsto \l_t\left(\frac{M}{r_0}\right)$ is a piece-wise linear function where each linear segment is ascending. 
From the definition of thin flows it follows that
\begin{equation*}
\sup_{\theta\in\R_{\geq 0}} \lp{t}(\theta) \leq \max\left\lbrace\left. \frac{\xp{e}(\theta)}{c_v(\theta)\cdot \num{e}} \right. : v\in V, e\in\inset{v}\right\rbrace.
\end{equation*}
Noting that $\xp{}(\theta)$ is a static flow of value~$r_0$ and using Lemma \ref{lem_asymptotic_min_eps} we obtain $\lp{t}(\theta) \leq \frac{r_0}{\epsilon \cdot \num{\text{min}}}$ for all times $\theta \in \R_{\geq 0}$, where $\num{\text{min}} \coloneqq \min_{e\in E} \num{e}$ is given by the parameters of the network.
Thus, $\comp{f}(M)$ consists of linear segments of slope at most $\frac{r_0}{\epsilon \cdot \num{\text{min}}}$ with $\comp{f}(0)=\l_t(0)$, and therefore, it holds for all $M\in\R_{\geq 0}$ that 
\begin{equation*}
\comp{f}(M) \leq \frac{M}{r_0}\cdot \frac{r_0}{\epsilon \cdot \num{\text{min}}} + \l_t(0) = M \cdot \frac{1}{\epsilon \cdot \num{\text{min}}} +\l_t(0) \in \bigO(M).
\end{equation*}
Conversely, as in the proof of \Cref{lem_asymptotic_optimal} we have $\comp{f}(M) \geq \frac{M}{r_0}$. 
\end{proof}


\IIILfour*
In order to prove this we first establish the following lemma on the dependence of the length of a phase as it is experienced at the source and at the sink, respectively. We deferred the proof to the appendix as it is equivalent to the one given in~\cite{bhaskar15}.

\begin{lemma}\label{III.C3}(cf.\cite[Corollary~17]{bhaskar15})
Let $\Gamma=\trg$ be a temporal routing game where the static flow underlying the quickest flow saturates every edge and let $f\in\allf{\Gamma}$. For any phase $i\in\phases$ with boundary points $\theta_{i-1}$ and $\theta_i$ it holds that
\begin{equation*}
\ome{i} - \ome{i-1} = \frac{r_0}{\ca{i}} (\theta_i - \theta_{i-1}).
\end{equation*}
\end{lemma}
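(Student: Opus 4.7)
The plan is to derive the identity directly from Lemma \ref{III.L2} by exploiting the constancy of quantities within a single phase. Recall that a phase $i$ is an interval $(\theta_{i-1}, \theta_i)$ on which the active edge set, the resetting edge set, the set of full edges and the spillback thin flow are constant; in particular the inflow at the sink $\fp{t}(\l_t(\theta))$ is constant on this interval and equals $\ca{i}$ by definition.

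First I would invoke Lemma \ref{III.L2}, which gives, for every $\theta \le \tfrac{M}{r_0}$,
\[
\lp{t}(\theta) \;=\; \frac{r_0}{\fp{t}(\l_t(\theta))}.
\]
Restricting to $\theta \in (\theta_{i-1}, \theta_i)$ and substituting $\fp{t}(\l_t(\theta)) = \ca{i}$ yields the pointwise identity $\lp{t}(\theta) = \tfrac{r_0}{\ca{i}}$ almost everywhere on the phase.

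The second step is to integrate. Since $\l_t$ is absolutely continuous (it is the cumulative integral of its derivative, the label derivatives being well-defined as in \eqref{eq:thin_flows}), we obtain
\[
\ome{i} - \ome{i-1} \;=\; \l_t(\theta_i) - \l_t(\theta_{i-1}) \;=\; \int_{\theta_{i-1}}^{\theta_i} \lp{t}(\theta)\,\diff\theta \;=\; \frac{r_0}{\ca{i}}\,(\theta_i - \theta_{i-1}),
\]
which is exactly the claimed equality. Strictly speaking the boundary values $\l_t(\theta_i)$ and $\l_t(\theta_{i-1})$ are to be read as the one-sided limits from inside the phase, matching the convention $\ome{i} = \l_t(\theta_i)$ established earlier in the paper.

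There is no real obstacle here: the saturation hypothesis on $\fh$ is not used in the argument itself (it is kept to match the hypotheses of the enclosing Lemma \ref{III.L4}), and the only non-trivial ingredient is Lemma \ref{III.L2}, which has already been proved. The care needed is simply to justify that $\fp{t}(\l_t(\cdot))$ really is constant on the open phase interval — this follows from the phase definition, since the thin-flow values and hence the outflow rates at all edges entering $t$ are constant within a phase.
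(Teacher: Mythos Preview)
Your proof is correct and follows essentially the same approach as the paper: both invoke Lemma~\ref{III.L2} to identify $\lp{t}(\theta) = r_0/\ca{i}$ on the phase and then integrate over $(\theta_{i-1},\theta_i)$ to obtain the difference $\ome{i}-\ome{i-1}$. Your additional remarks on the (non-)use of the saturation hypothesis and on the constancy of the sink inflow within a phase are accurate and only make the argument more explicit.
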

\begin{proof}[Proof of Lemma \ref{III.C3}]
By definition we have $\ome{i} - \ome{i-1} = \l_t(\theta_i) - \l_t(\theta_{i-1})$ and $\fp{t}(\l_t(\theta)) = \ca{i}$ for some~$\theta$ in phase $i$. Thus, by definition of the integral
\begin{align*}
\ome{i} - \ome{i-1} &= \int^{\theta_i}_{\theta_{i-1}} \lp{t}(\phi)\text{d}\phi \\
&= \int^{\theta_i}_{\theta_{i-1}} \frac{r_0}{\ca{i}}\text{d}\phi \\
&= \frac{r_0}{\ca{i}} (\theta_i - \theta_{i-1}),
\end{align*}
where we used Lemma \ref{III.L2} for the second equality.
\end{proof}

\begin{proof}[Proof of Lemma \ref{III.L4}]
By the definition of the shortest path network, for any $\theta \in \mathbb{R}$ and any $e=uv\in G_\theta$ for the transit time we can write $\tau_e=\l_w(\theta)-\l_v(\theta)-q_e(\l_v(\theta))$. Furthermore, by the definition of derivatives of piece-wise linear functions we can express the label function for any $v\in V$ and the waiting time of any edge $e=uv$ at time $\theta_r$ in terms of their derivatives as $\l_v(\theta_r)=\sum_{i\in\phases} \lpp{i}{v}\cdot(\theta_i - \theta_{i-1}) + \l_v(\theta_0)$ and $q_e(\l_v(\theta_r))=\sum_{i\in\phases} \qpp{i}{e} \cdot(\theta_i - \theta_{i-1})$, where we used that $q_e(\theta_0)=0$ for all edges, as at time $\theta_0 = 0$ there are no queues.
Hence, we conclude for any $e=vw\in E_{\theta_r}$ that
\begin{equation*}
\tau_e = \l_w(\theta_0)-\l_v(\theta_0) + \sum_{i\in\phases} (\lpp{i}{w} - \lpp{i}{v} - \qpp{i}{e})\cdot(\theta_i - \theta_{i-1}).
\end{equation*}
For any $e=vw\notin E_{\theta_r}$ we have by~\cite[Lemma~7]{sering19} that
\begin{equation*}
\tau_e \geq \l_w(\theta_r) - \l_v(\theta_r) = \l_w(\theta_0)-\l_v(\theta_0) + \sum_{i\in\phases} (\lpp{i}{w} - \lpp{i}{v} - \qpp{i}{e})\cdot(\theta_i - \theta_{i-1}).
\end{equation*}
Summing over all edges in path $p$ and using the identities $\l_s(\theta_0)=\theta_0=0$ and $\lpp{i}{s}=1$ yields
\begin{align*}
\tau_p = \sum_{e\in p} \tau_e &\geq \sum_{e\in p} \l_w(0)-\l_v(0) + \sum_{i\in\phases} (\lpp{i}{w} - \lpp{i}{v} - \qpp{i}{e})\cdot(\theta_i - \theta_{i-1}) \\
&= \l_t(0) - \l_s(0) + \sum_{i\in\phases} (\lpp{i}{t} - \lpp{i}{s} - \qpp{i}{p})\cdot(\theta_i - \theta_{i-1}) \\
&= \ome{0} + \sum_{i\in\phases} (\frac{r_0}{\fp{t}(l_t(\phi_i))} - 1 - \qpp{i}{p})\cdot(\theta_i - \theta_{i-1})
\end{align*}
for some time $\phi_i$ in phase $i$, where for the last equality we used Lemma \ref{III.L2}. Using $\ca{i} = \fp{t}(\l_t(\phi_i))$ and Lemma \ref{III.C3} we can conclude
\begin{align*}
\tau_p &\geq \ome{0} + \sum_{i\in\phases} (\frac{r_0}{\ca{i}} - 1 - \qpp{i}{p})\frac{\ca{i}}{r_0}(\ome{i} - \ome{i-1}) \\
&= \ome{0} + \ome{r} - \ome{0} - \sum_{i\in\phases} (1 + \qpp{i}{p})\frac{\ca{i}}{r_0}(\ome{i} - \ome{i-1}) \\
&= \ome{r} - \sum_{i\in\phases} (1 + \qpp{i}{p})\frac{\ca{i}}{r_0}(\ome{i} - \ome{i-1}).
\end{align*}
\end{proof}

\IIILfive*
\begin{proof}
We compare the arrival rates of both flows at the sink  $t$. 
First, consider the rate of arrival at the sink $\gp{t}$ of the optimal flow. Note that we can always suppose an optimal flow to never build up any queues. Therefore, for any simple $s$-$t$ paths $p\in\Paths_{s,t}$ after time $\tau_p$ the flow on $p$ arrives with rate $\fh_p$ at $t$, i.e.\ $\gp{t}(\theta) = \sum_{p\in\Paths_{s,t}(\theta)} \fh_p$, where $\Paths_{s,t}(\theta)$ is the set of all simple $s$-$t$ paths with $\tau_p \leq \theta$. The total flow arriving at the sink by time $\theta$ is the area under that curve up to time $\theta$. Thus, the total flow $M$ that $g$ routes to $t$ can be expressed as
\begin{equation}\label{III.L5.1}
M = r_0\cdot\opt - \sum_{p\in\Paths_{s,t}} \fh_p\tau_p.
\end{equation}
We now perform a similar calculation for a Nash flow over time $f\in\allf{\Gamma}$.
We again depict $M$ by the inflow rate at the source over several intervals, using the phases of $f$ instead of the path length. The rate with which flow enters the sink in phase $i$ of the dynamic equilibrium $f$ is $\ca{i}$ and the length of that phase experienced at the sink is $\ome{i}-\ome{i-1}$. Thus we can obtain 
\[M = \sum_{i\in\phases} \ca{i}(\ome{i}-\ome{i-1})\]
where $\comp{f}$ is implicitly present as $\ome{r}$.
Equating this with the expression for $M$ with respect to the optimal flow given in \eqref{III.L5.1} yields the desired
\begin{equation*}
r_0 \opt = \sum_{p\in \Paths_{s,t}} \fh_p \tau_p + \sum_{i\in\phases} \ca{i}(\ome{i} - \ome{i-1}). \qedhere
\end{equation*}
\end{proof}

\IIILseven*
\begin{proof}
First note, that $(\ome{i})_{i\in\phases}$ is a strict monotonically increasing sequence that is bounded by $\comp{f} = \l_t(\frac{M}{r_0})=\ome{r}\in\phases$. Furthermore, $\sup_{i\in\phases} \ca{i} = \sup_{i\in\phases} \fp{t}(\l_t(\phi_i)) \leq \sum_{e\in\inset{t}} \num{e}$ for any point in time $\phi_i\in (\theta_{i-1},\theta_i)$ and $\fh_p \leq r_0$. We also know that $\qpp{i}{p} \leq \max_{e\in E, \theta \in \R_{\geq 0}} \frac{M}{\fm{e}(\theta)} \leq \max_{e\in E, \theta \in \R_{\geq 0}} \frac{M}{c_v(\theta)\cdot \nup{e}}$ which is bounded by Lemma \ref{lem_asymptotic_min_eps}. This together yields that the set of $\{\lambda_i = \frac{\ca{i}}{r_0} \cdot \sum_{p\in\Paths_{s,t}} \fh_p \qpp{i}{p} : i\in\phases\}$ is bounded.
We thus can write
\begin{align*}
\sum_{i\in\phases} \lambda_i \cdot (\ome{i} - \ome{i-1} ) &\leq \sum_{i\in\phases} \sup_{i\in\phases} \lambda_i \cdot (\ome{i} - \ome{i-1} ) \\
&= \sup_{i\in\phases}  \sum_{i\in\phases} \lambda_i \cdot (\ome{i} - \ome{i-1} ) \\
&= \sup_{i\in\phases}  (\ome{r} - \ome{0}).
\end{align*}
Since we know that $\sup_{i\in\phases} \ome{i} = \ome{r} = \comp{f}$ we can use the telescoping principle for the infinite sum in the last equality and cancel out all intermediate $\ome{i}$ for $i\in\phases\setminus \{0,r\}$.
\end{proof}

\IIICsix*
\begin{proof}
Let $\Paths_{s,t}$ be the set of all simple $s$-$t$ paths in $G$. Summing $\fh_p \tau_p$ over all paths $p\in\Paths_{s,t}$ and using Lemma  \ref{III.L4} yields
\begin{align*}
\sum_{p\in\Paths_{s,t}} \fh_p \tau_p &\geq \sum_{p\in\Paths_{s,t}} \fh_p \ome{r} - \sum_{p\in\Paths_{s,t}}\left[ \fh_p \sum_{i\in\phases} (1+\qpp{i}{p}) \frac{\ca{i}}{r_0} (\ome{i} - \ome{i-1})\right] \\
&= \ome{r} \sum_{p\in\Paths_{s,t}} \fh_p - \sum_{i\in\phases} \left[\frac{\ca{i}}{r_0} (\ome{i} - \ome{i-1}) \sum_{p\in\Paths_{s,t}} \fh_p(1 + \qpp{i}{p})\right] \\
&= \ome{r} r_0 - \sum_{i\in\phases} \left(\ca{i} + \frac{\ca{i}}{r_0} \sum_{p\in\Paths_{s,t}} \fh_p \qpp{i}{p}\right) (\ome{i} - \ome{i-1}),
\end{align*}
where in the last equality we used the fact that $\fh$ saturates every edge, especially that $\sum_{p\in\Paths_{s,t}} \fh_p = r_0$. Plugging this into the result of Lemma \ref{III.L5} yields
\begin{align*}
r_0 \cdot \opt &\geq \sum_{i\in\phases} \ca{i} (\ome{i} - \ome{i-1}) + \ome{r} r_0 \\
&\phantom{\geq \sum_{i\in\phases} \ca{i}} - \sum_{i\in\phases} \left(\ca{i} + \frac{\ca{i}}{r_0} \sum_{p\in\Paths_{s,t}}\fh_p \qpp{i}{p}\right)(\ome{i} - \ome{i-1}) \\
&= \ome{r} r_0 - \sum_{i\in\phases} \left(\frac{\ca{i}}{r_0}\sum_{p\in\Paths_{s,t}} \fh_p\qpp{i}{p}\right)(\ome{i} - \ome{i-1}).
\end{align*}
Applying Lemma \ref{III.L7} yields
\begin{align*}
r_0 \cdot \opt &\geq \ome{r}\cdot r_0 - (\ome{r}-\ome{0}) \sup_{i\in\phases} \left(\frac{\ca{i}}{r_0}\sum_{p\in\Paths_{s,t}} \fh_p\qpp{i}{p}\right) \\
&= \comp{f} \cdot r_0 - \comp{f} \sup_{i\in\phases} \left(\frac{\ca{i}}{r_0}\sum_{p\in\Paths_{s,t}} \fh_p\qpp{i}{p}\right),
\end{align*}
where we used $\ome{r} = \comp{f}$. Dividing both sides by $r_0 \cdot \comp{f}$ yields 
$$\frac{\opt}{\comp{f}} \geq 1 - \frac{1}{{r_0}^2} \sup_{i\in\phases} \ca{i}\sum_{p\in\Paths_{s,t}} \fh_p\qpp{i}{p}.$$ 
Since we do not know whether this supremum is attained we have to inject an $\epsilon$ error term here. We know that for any $\epsilon > 0$ there is a phase $i\in\phases$ such that
\begin{align*}
\frac{\opt}{\comp{f}} + \epsilon &\geq 1 - \frac{1}{{r_0}^2} \ca{i}\sum_{p\in\Paths_{s,t}} \fh_p\qpp{i}{p} \\
&=1 - \frac{\ca{i}}{{r_0}^2} \sum_{e\in E} \num{e}\qpp{i}{e},
\end{align*}
where we used $\qpp{i}{p} = \sum_{e\in p} \qpp{i}{e}$ and the fact that $\fh$ saturates every edge, especially that $\num{e} = \fh_e$ for all edges $e \in E$.
\end{proof}

\IIILeight*
Note, that Cominetti et al.~\cite{cominetti17} show (after~\cite{bhaskar15} was published) that there are networks where Nash flows over time route flow into the sink at rate higher than the inflow rate at the source during certain phases. But Lemma \ref{III.L8} and the corresponding result in~\cite{bhaskar15} only hold for phases of a Nash flow over time in which it holds that $\frac{r_0}{\ca{i}} \geq 1$. Nevertheless it is implicitly ensured that Nash flows over time on saturated graphs satisfy this in every phase since we lowered the edge-capacities to the value of $\fh$ with $|\fh|=r_0$.

\begin{proof}[Proof of Lemma \ref{III.L8}]
Since we concentrate on a single phase $i$ again we omit the subscript for the case, i.e., $\xp{e}, \lp{v}, c_v,$ and $\qp{e}$ all correspond to that phase $k$. Let $E^*$ be the set of edges with strict positive queues at the beginning of the phase and $E^\prime$ the set of edges active in phase $i$. By definition we have for every $e=uv\in E$
\begin{equation}\label{III.L8.1}
\text{$\qp{e}=$}
\left\{
\begin{aligned}
\lp{v} - \lp{u} ~ &\text{ if }e\in E^\prime, \\
0 \phantom{ - \lp{u} ~ } &\text{ otherwise.}\\
\end{aligned}
\right.
\end{equation}
With that and a case distinction over the $e\in E$ we prove $c_v \num{e} \qp{e} = \xp{e} \cdot (1- \frac{\lp{u}}{\lp{v}})$ for all $e\in E$. If $e\notin E^\prime$, then $e$ is not active in phase $i$ and therefore $\qp{e} = \xp{e} = 0$ and the statement is true in this case. If $e\in E^\prime\setminus E^*$, then $\xp{e}$ might be positive, but $e$ has no queue throughout the phase and thus $\qp{e} = 0$. But from Equation \eqref{III.L8.1} it follows that in this case it must hold that $\lp{u} = \lp{v}$ and therefore the statement also holds in this case. Finally if $e\in E^*$, i.e.\ $\qp{e} \neq 0$ then by definition of the thin flow $\xp{e} = c_v \num{e}\lp{v}$ and thus 
\begin{equation*}
\xp{e} \cdot \left(1-\frac{\lp{u}}{\lp{v}}\right) = c_v \num{e} \lp{v} \cdot \left(1- \frac{\lp{u}}{\lp{v}}\right) = c_v \num{e} \cdot \left(\lp{v}-\lp{u}\right) \overset{\eqref{III.L8.1}}{=} c_v \num{e} \qp{e}.
\end{equation*}

Let $\Paths_{s,t}$ be the set of all simple $s$-$t$ paths in $G$ and let $\{\xp{p}\}_{p\in\Paths_{s,t}}$ be a path decomposition of $\xp{~}$. Then by summing over all edges $e \in E$ we get the following bound from the above equation and the fact that $\xp{e} = \sum_{p\in\Paths_{s,t}:e\in p} \xp{p}$.
\begin{align*}
\sum_{e=uv\in E} c_v \num{e} \qp{e} &= \sum_{e=uv\in E} \xp{e} \cdot \left(1-\frac{\lp{u}}{\lp{v}}\right) \\
&= \sum_{p\in\Paths_{s,t}} \xp{p} \sum_{uv\in p} \left(1-\frac{\lp{u}}{\lp{v}}\right) \\
&\leq \sum_{p\in\Paths_{s,t}} \xp{p} \ln\left(\frac{r_0}{\ca{i}}\right) \\
&= r_0 \ln\left(\frac{r_0}{\ca{i}}\right),
\end{align*}
where in the inequality we used~\cite[Claim~12]{correa19} and in the last equality we used $|\xp{~}|=r_0$. Now let $\minc{f}{i} \coloneqq \min\{c_v(\theta) : v\in V, \theta \in (\theta_{i-1}, \theta_i)\}$, i.e.\ $\minc{f}{i}$ is the minimum of all spillback factors of every vertex $v$ in phase $i$ of the Nash flow $f$. The above inequality then yields 
\begin{equation*}
\minc{f}{i} \sum_{e\in E} \num{e} \qp{e} \leq \sum_{e=uv\in E} c_v \num{e} \qp{e} \leq r_0 \cdot \ln\left(\frac{r_0}{\ca{i}}\right).
\end{equation*}
And since we know that all spillback factors are strictly positive it follows that 
\begin{equation*}
\sum_{e\in E} \num{e} \qp{e} \leq \frac{r_0}{\minc{f}{i}} \cdot \ln\left(\frac{r_0}{\ca{i}}\right). \qedhere
\end{equation*}
\end{proof}
\fi

\end{document}